\tikzset{join/.code=\tikzset{after node path={%
\ifx\tikzchainprevious\pgfutil@empty\else(\tikzchainprevious)%
edge[every join]#1(\tikzchaincurrent)\fi}}}
\tikzset{>=stealth',every on chain/.append style={join},
         every join/.style={->}}
\tikzset{
    >=stealth',
    punkt/.style={
           rectangle,
           rounded corners,
           draw=black, very thick,
           text width=6.5em,
           minimum height=2em,
           text centered},
    pil/.style={
           ->,
           thick,
           shorten <=2pt,
           shorten >=2pt,}
}
\newcommand{\BB}{\mathbb}
\newcommand{\FR}{\mathfrak}
\def\cA{\mathscr{A}}
\newcommand{\bea}{\begin{eqnarray}}
\newcommand{\eea}{\end{eqnarray}}
\newcommand{\be}{\begin{equation}}
\newcommand{\ee}{\end{equation}}
\newcommand{\nn}{\nonumber}
\newcommand{\Tr}{\textrm{Tr}}
\newcommand{\sbullet}{\vcenter{\hbox{\tiny$\bullet$}}}
\newcommand{\bra}{\langle}
\newcommand{\ket}{\rangle}
\newcommand{\sgn}{\textrm{sgn}}
\newcommand{\reeb}{\vcenter{\hbox{\scriptsize$R$}}}
\newcommand{\sreeb}{\vcenter{\hbox{\tiny$R$}}}
\newcommand{\gYM}{g_{\textrm{\tiny{$YM$}}}}
\newcommand{\gYMf}{g^{\textrm{\tiny{$5D$}}}_{\textrm{\tiny{$YM$}}}}
\newcommand{\gYMt}{g^2_{\textrm{\tiny{$YM$}}}}
\newcommand{\spinc}{\operatorname{spin^c}}
\def\ga{\alpha}
\def\gb{\beta}
\def\Gc{\Gamma}
\def\gd{\delta}
\def\Gd{\Delta}
\def\ep{\epsilon}
\def\gs{\sigma}
\def\ep{\epsilon}
\def\gk{\kappa}
\def\gl{\lambda}
\def\Go{\Omega}
\def\go{\omega}
\def\Gu{\Upsilon}
\DeclareMathAlphabet{\mathpzc}{OT1}{pzc}{m}{it}
\newtheorem{theorem}{Theorem}[section]
\newtheorem{proposition}[theorem]{Proposition}
\theoremstyle{definition}
\newtheorem{example}[theorem]{Example}
\newtheorem{remark}[theorem]{Remark}
\begin{document}
\begin{flushright} \small
UUITP-30/16
 \end{flushright}
\smallskip
\begin{center} \Large
{\bf $\mathcal{N}=2$ supersymmetric gauge theory \\
on connected sums of $S^2\times S^2$}
 \\[12mm] \normalsize
{\bf Guido Festuccia${}^a$, Jian Qiu${}^{a,b}$, Jacob Winding${}^a$, Maxim Zabzine${}^a$} \\[8mm]
 {\small\it
   ${}^a$Department of Physics and Astronomy,
     Uppsala University,\\
     Box 516,
     SE-75120 Uppsala,
     Sweden\\
   \vspace{.5cm}
      ${}^b$ Mathematics Institute,  Uppsala University, \\
   Box 480, SE-75106 Uppsala, Sweden\\
   }
\end{center}
\vspace{7mm}
\begin{abstract}
We construct 4D $\mathcal{N}=2$ theories on an infinite family of 4D toric manifolds with the topology of connected sums of $S^2 \times S^2$.
 These theories are constructed through the dimensional reduction along a non-trivial $U(1)$-fiber
 of  5D theories on toric Sasaki-Einstein manifolds. We discuss the conditions under which such reductions can be carried out and give a partial classification result of the resulting 4D manifolds. We calculate the partition functions of these 4D theories and they involve both instanton and anti-instanton contributions, thus generalizing  Pestun's famous result on $S^4$.
 \noindent
\end{abstract}

\eject
\normalsize

\tableofcontents
\section{Introduction}

Starting from the work \cite{Pestun:2007rz} there has been huge activity on  studying  supersymmetric theories on curved manifolds and on the exact calculation of their
 partition functions using  localization techniques. The original work  \cite{Pestun:2007rz} was  devoted to $\mathcal{N}=2$ gauge theory on $S^4$, but since then there has been significant progress
  in diverse dimensions (from 2D to 7D) and on diverse backgrounds. For a recent overview of the field see  \cite{Pestun:2016zxk}; localization computations in different dimensions are reviewed in \cite{Pestun:2016jze} (for the 4D case see also \cite{Hosomichi:2016flq}).

We have a precise classification of the geometries on which 4D $\mathcal{N}=1$ theories can be placed preserving supersymmetry (see e.g. \cite{Klare:2012gn, Dumitrescu:2012ha}). The same is true for $\mathcal{N}=2$ in 3D \cite{Klare:2012gn, Closset:2012ru} and $\mathcal{N}=(2,2)$ theories in 2D \cite {Closset:2014pda}. Many  localization calculations have been performed explicitly in lower dimension (2D and 3D) while in four dimensions applications of this technique to $\mathcal{N}=1$ have concentrated on a limited set of geometries \cite{Benini:2011nc, Assel:2014paa, Nishioka:2014zpa, Closset:2013sxa}.
In the case of 4D $\mathcal{N}=2$ theories the situation is even less satisfactory as we do not  yet have a complete classification of the corresponding supersymmetric geometries. In particular, with a view towards applying localization techniques, we are interested in 4D manifolds that admit a toric action.  It is interesting to notice that in 5D there exists a rich class of toric Sasaki-Einstein manifolds that admit $\mathcal{N}=1$ theories. The goal of the present paper is to generate a rich class of toric 4D backgrounds which admit $\mathcal{N}=2$ theories from dimensionally reducing these 5D examples. Essentially we will perform the reduction along non-trivial $U(1)$ fibration of the toric Sasaki-Einstein manifold in order to get a 4D supersymmetric theory. We also derive the exact 4D partition function for these theories.  The manifolds we will consider have topological type $\#_k (S^2\times S^2)$, and are a sub-class of the possible homeomorphism types of smooth simply connected spin 4-manifolds $(\pm M_{E_8})^{\#2m}\#(S^2\times S^2)^{\#k}$.

Using the rigid supergravity approach \cite{Festuccia:2011ws} it is not easy to completely classify the  geometries on which
 4D $\mathcal{N}=2$ theories can be placed preserving supersymmetry (see \cite{Gupta:2012cy, Klare:2013dka, Hama:2012bg, Pestun:2014mja, Butter:2015tra} for progress in this direction).
 The best studied cases are the round sphere \cite{Pestun:2007rz} and the squashed sphere \cite{Hama:2012bg, Pestun:2014mja}.
The squashed sphere can be further generalized to local $T^2$-bundle fibrations \cite{Pestun:2014mja}.
Equivariantly twisted theories on toric K\"ahler surfaces were also considered, with emphasis on $S^2 \times S^2$  \cite{Bawane:2014uka}
and $\mathbb{C}P^2$ \cite{Rodriguez-Gomez:2014eza, Bershtein:2015xfa}.  The study of $\mathcal{N}=2$ theories on $S^2\times S^2$ was also started in \cite{Sinamuli:2014lma}.

The main result of this work  is the explicit construction of  $\mathcal{N}=2$ SYM theories on an infinite family of 4D toric manifolds
 with the topology of connected sums $\#_k (S^2 \times S^2)$ via dimensional reduction from 5D. We would like to stress that our 4D examples are not generically K\"ahler
  and here by toric 4D manifolds we mean 4D manifold with smooth $T^2$-action with the orbit space being convex polytope.
 We start by considering toric Sasaki-Einstein manifolds which admit a free $U(1)$-action that preserves the Killing spinors, and we perform the reduction along this $U(1)$.
  We provide a partial classification of such toric Sasaki-Einstein manifolds.  The resulting 4D theory has  unusual properties originating from the fact that the $U(1)$-fibre does not have a constant size with respect to the Sasaki-Einstein metric.
As a result the 4D theory has a position dependent Yang-Mills coupling. If we add a $\theta$-term to the 4D theory we can introduce the point dependent complex coupling $\tau$, which takes value in the upper half plane
 \bea
  \tau (x) = \frac{4\pi i}{g_{YM}^2 (x)} + \frac{\theta}{2\pi}~,
 \eea
 where $g_{YM}  (x)$ is the 4D dimensionless Yang-Mills coupling  and its dependence from $x$
  comes from the Sasaki-Einstein metric in 5D, see section \ref{sec_Red} for further explanation.
  The connected sum $\#_k (S^2 \times S^2)$ is a toric manifold with $T^2$-action, and it has $(2+2k)$-fixed points.
 The exact  partition functions for these 4D theories is given by the classical term, one-loop term and the instanton term
\bea
	Z = \int\limits_{\FR{t}}da~e^{- \sum\limits_{i=1}^{2k+2} \frac{4\pi^2 r^2}{\epsilon_1^i \epsilon_2^i g^2_{YM}(x_i)} \Tr[a^2]}\cdotp
\frac{{\det}_{adj}' ~  \Gu^C (ia|\reeb^1,\reeb^2)} {\det_{\underline{R}}\Gu^C ( ia+im+\vec\xi\cdotp\vec\reeb/2| \reeb^1,\reeb^2) }  Z_{\mathrm{inst} } ( a | \vec \reeb )  ~,
\eea
where $\vec\reeb$ is related to the $T^2$-action and $\Gu^C$ is a special function which gives the one-loop determinant.
 The above partition function corresponds to the $\mathcal{N}=2$ vector multiplet coupled to a hypermultiplet in representation $\underline{R}$.
 The instanton contributions come from  point-like instantons and anti-instantons which sit on the fixed points $x_i$,
\bea
Z_{\mathrm{inst} } ( a | \vec \reeb )  = \prod_{i=1}^{k +1} Z_{\mathrm{inst}}^{\mathbb{C}^2 } ( a , q_i | \epsilon_1^i, \epsilon_2^i ) \times \prod_{i=k+2}^{2+2k} Z_{\mathrm{inst}}^{\mathbb{C}^2 } ( a , \bar q_i | \epsilon_1^i, \epsilon_2^i )~,
\eea
where
\bea
q_i = q(x_i) = e^{2\pi i \tau(x_i)}~.
\eea
 Here   $Z_{\mathrm{inst}}^{\mathbb{C}^2 } ( a , q_i | \epsilon_1^i, \epsilon_2^i )$  is the Nekrasov partition function on $\mathbb{C}^2$ with equivariant parameters $\vec \epsilon_i$, that can be read off from the fixed points $x_i$.
  Note that the  theories considered here are not the topologically twisted Donaldson-Witten theory, since we have  a mixture of instanton- and anti-instanton-contributions.
  It is possible to specify further the toric geometry and find situations when the instanton and anti-instanton contributions pair together,
   \bea
   Z_{\mathrm{inst} } ( a | \vec \reeb )  = \prod_{i=1}^{k+1 } |Z_{\mathrm{inst}}^{\mathbb{C}^2 } ( a , q_i | \epsilon_1^i, \epsilon_2^i ) |^2~.
  \eea
  Thus our result generalizes Pestun's famous result on $S^4$ \cite{Pestun:2007rz}.

One may get nervous from the fact that $\tau$ depends on $x$. However this is not so exotic and it was discussed  previously  in \cite{Nekrasov:2002qd, Losev:2003py} in the context 
equivariant localization of gauge theories on $\mathbb{R}^4$.
Moreover the gauge theories with $\tau(x)$  can be obtained from the reduction of $(2,0)$ 6D theory on elliptically fibered K\"ahler manifolds \cite{Martucci:2014ema,Assel:2016wcr}. 
 Nevertheless we can deform 5D theory by  performing a Weyl rescaling of our 5D manifold so that the length  of $S^1$-fiber is fixed to  be a constant.
 Through a calculation using the rigid limit of minimal off-shell 5D supergravity, we check that this can be done without breaking supersymmetry. This deformation induces a Q-exact change of the action. After reducing to 4D using the rescaled background, we now find a theory with a constant Yang-Mills coupling, but where the $x$-dependence is now shifted to a $\theta$-term. It is important to stress that the partition function of the theory does not depend on $\tau(x)$ in general, but only on its values at the fixed points.

The paper is organised as follows:
Sections \ref{sec_pdab} and \ref{sec_Acr} are preparatory sections where we analyze the conditions under which the 5D $\mathcal{N}=1$  theory on a
 non-trivial circle fibration can be reduced down to the 4D $\mathcal{N}=2$  theory, while sections \ref{sec_Red} and \ref{sec-partition} contain the main
  result with the explicit construction of 4D $\mathcal{N}=2$  theory and the calculation of its partition function.
    In section \ref{sec_pdab} we discuss in detail the criterion for pushing a bundle down an $S^1$-fibre.
 In particular  the parameters for the supersymmetry transformations are a pair of Killing spinors in 5D, and we seek conditions under which they can be reduced to 4D.
 This allows us to avoid dealing directly with the  supersymmetry algebra in 4D.
 In section \ref{sec_Acr} we specialize to the case of toric Sasaki-Einstein manifolds and we
 present a simple classification of toric Sasaki-Einstein manifolds with a free $U(1)$ isometry preserving the holomorphic volume form (of the Calabi-Yau cone).
 The classification is not that of the regular toric SE manifolds and the resulting 4D geometry, which we study in sections \ref{sec_tgotb} and \ref{sec_IfagoB}, is more interesting.
   With this preparation in section \ref{sec_Red} we reduce the action and the supersymmetry transformations of the 5D supersymmetric gauge theory on an Sasaki-Einstein
    manifold to 4D. We also discuss various features of the reduced $N=2$ 4D theory and consider some of its supersymmetric deformations.
In section \ref{sec-partition} we discuss the partition function of the 4D theories, which can be obtained discarding non-zero Kaluza-Klein modes.
    We also consider  the issue of assembling the instanton sector. Due to the misalignment of the aforementioned freely acting $U(1)$ and the Reeb vector field, one gets a mixture of instantons and anti-instantons. This is a main new feature of our theory that distinguishes it from the Donaldson-Witten theory. The paper contains appendices which complement the main text with some  background and  technical considerations.

\section{Conditions for reduction} \label{sec_pdab}

Performing dimensional reduction is straightforward if the 5D manifold is a trivial $S^1$ bundle over a 4D base manifold.
If the $S^1$ bundle is non-trivial it is still possible to reduce. Locally this is Scherk-Schwarz reduction \cite{Scherk:1979zr} but, since we are considering compact manifolds, we need to identify under which conditions there are no global obstructions.
 We will see that  stating these conditions for differential forms is straightforward, but for spinors the issue is a more subtle.
In general, the various fields that we wish to dimensionally reduce are sections of some vector bundles over our manifold.
Hence, we will consider when bundles and sections of these bundles can be consistently pushed down from the 5D manifold to the 4D base.
In the following we will state the relevant facts and give some examples.
Proofs are presented in appendix \ref{app_proofs}.

To set our notation, let $S^1\to M\to B$ be a nontrivial circle fibration, and $E\to M$ be a vector bundle.
We first give a criterion for being able to push the bundle $E$ down to $B$.
If $E$ possesses a trivialization over patches of the form $[0,2\pi]\times U_i$, with $\{U_i\}$ a cover of the base manifold $B$, such that the transition functions are independent of the circle direction, then $E$ can be pushed down to $B$.
We can reformulate this criterion as follows: Denote the coordinate of the circle fibre as $\ga$ and let $A$ be a connection of $E$, then if $P\exp i\int_0^{2\pi} d\ga A_{\ga}=id$, the bundle $E$ can be pushed down. Moreover, when this is satisfied,  sections of $E$ such that $D_{\ga}s=0$ can be pushed down.

The push down is not unique but depends on the choice of connection. As an example, consider $S^5$ as the total space of the Hopf bundle $S^1\to S^5\stackrel{\pi}{\to}\BB{P}^2$. We want to push down the trivial bundle $S^5\times\BB{C}$ to $\BB{P}^2$. One way is to choose the zero connection and the pushdown is also the trivial bundle. Alternatively one may choose $A=n\gk$, where $\gk$ is the contact 1-form of the contact structure of $S^5$ associated with the Hopf fibration structure. Concretely $\gk$ is described as $d\ga+\pi^*\cA$ where $\cA$ is the connection on $\BB{P}^2$ of the bundle ${\cal O}(1)$. The holonomy of $\gk$ is $2\pi$, and the push down is ${\cal O}(n)$.

The above example shows that pushing down is not a canonical procedure. On the other hand, we know that there exists a canonical procedure to push down differential forms. Denote with $X=\partial_{\ga}$ the vector field along the circle fiber. A 1-form $\xi$ that satisfies $\iota_X\xi=0$ and $L_X\xi=0$  (where $L_X$ is the Lie derivative along $X$ and $\iota_X$ the contraction of a form with $X$)  can be regarded as a 1-form on $B$ canonically.

 The following example clarifies this issue. Consider  the subbundle $T_H^*M$ of 1-forms $\xi$ with $\iota_X\xi=0$, i.e. horizontal 1-forms. We want to push it down to $B$.

 First without any loss of generality, we can choose a metric such that $X$ is Killing and normalized to $\bra X,X\ket=1$.
  It follows that $J_{\mu\nu}=-\nabla_{\mu}X_{\nu}$ is anti-symmetric. From $2 X^{\rho}\nabla_{\rho}X_{\mu}=2 X^{\rho}\nabla_{\mu}X_{\rho}=\partial_{\mu}\bra X,X\ket=0$ one has that $J$ is horizontal with respect to $X$. The subbundle $T_H^*M$ possesses the connection
  \bea
  	D_Y\xi=\nabla_Y\xi+gX\cdotp\bra\nabla_YX,\xi\ket~,~~~\xi\in T^*_HM~,~~~Y\in TM~,\nn
  \eea
  where $\nabla$ is the Levi-Civita connection.
  Hence the covariant derivative $D_Y$ is written as
  \bea
  	D_Y\xi_{\mu}=Y^{\rho}\partial_{\rho}\xi_{\mu}-Y^{\gs}\Gc^{\rho}_{\gs\mu}\xi_{\rho}-X_{\mu}Y^{\gs}J_{\gs}^{~\rho}\xi_{\rho}\,.
  \nn\eea
  In particular setting $Y=X$
  \bea
  	D_X\xi_{\mu}=X^{\rho}\partial_{\rho}\xi_{\mu}+(-J^{\rho}_{~\mu}+\partial_{\mu}X^{\rho})\xi_{\rho}=L_X\xi_{\mu}+J_{\mu}^{~\rho}\xi_{\rho}~.\nn\eea
  Thus $\xi$ can be pushed down if the right hand side vanishes
  \bea
  	0=D_X\xi=L_X\xi+J\xi~.\label{cond_I}
  \eea
  This is not quite the usual condition $L_X\xi=0$, but rather depends on the details of $J$.
  However one can write a different connection for $T^*_HM$ as
  \bea
  	D^{(n)}_Y\xi=\nabla_Y\xi+gX\cdotp\bra\nabla_YX,\xi\ket-n\bra Y,X\ket J\xi~,\label{LC_n}
  \eea
  which is valid since $\iota_X(J\xi)=0$ from the horizontality of $J$.
  Choosing $n=1$, we get the condition
  \bea
  	0=D^{(1)}_X\xi=L_X\xi\nn
  \eea
  for pushing down $\xi$.

As above $S^5$ can be used as an example. Then $J$ is a complex structure transverse to the Hopf fibre and so $J^2=-1$ on $T^*_HS^5$.
  This shows that for any integer $n$, the connection $D^{(n)}$ has holonomy $e^{2\pi i(n-1)}$ along the Hopf fibre, so that it is a valid choice of connection for pushing down $T_H^*S^5$ to $\BB{P}^2$.
  For $n=1$, the push down bundle is $T^*\BB{P}^2$, while for general $n$, it is the twisted $T^*\BB{P}^2\otimes{\cal O}(n-1)$.

\subsection{Reduction of the spin bundle}\label{sec_red_spin}

In this subsection we will consider the particular case of the spin bundle.  According to the general discussion above, we need a spin connection with trivial holonomy and we will push down sections $s$ satisfying $D_X s=0$.
  Note that the push down bundle may be a spin bundle twisted by some line bundle or even a $\spinc$ bundle.

  To write down a spin connection,  choose a vielbein $\{e^a\,|\,e^a\in\Gc(TM),~ \bra e^a,e^b\ket=\gd^{ab}\}$ and consider the Levi-Civita connecion in this basis
  \bea
  	\go_Y^{ab}=\bra e^a,\nabla_Y e^b\ket~,~~~Y\in\textrm{vect}\,(M)~.\nn
  \eea
  Then the spin connection is the lift $\FR{so}\to\FR{spin}$
  \bea
  	D_Y=Y\cdotp\partial+\frac14\go_Y^{ab}\Gc^{ab}~.\nn
  \eea

  As the spin bundle is equipped with a spinor Lie-derivative $L_Y^s$ for $Y$ Killing  \cite{Kosmann1971,Figueroa:1999va}, a natural requirement for pushdown could be $L^s_Xs=0$. We will see that this condition can be made precise along the same lines as in the discussion about the cotangent bundle above.

We can pick $\{e^a\}$ to satisfy locally
  \bea
 	 L_Xe^a=0~,\label{choice_precarious}
  \eea
  where $X$ is along the $S^1$ fibre and normalised as always. We first show that when this is done, then $L_X^s=X\cdotp\partial$, i.e. an ordinary derivative.
  The spinor Lie derivative along a Killing vector field is defined as
  \bea
  	 L^s_X=D_X+\frac14(\nabla_mX_n)\Gc^{mn}=X\cdotp\partial+\frac14\go_X^{ab}\Gc^{ab}+\frac14(\nabla_{\mu}X_{\nu})\Gc^{\mu\nu},\nn
  \eea
  where $\Gc_{\mu}=\Gc_ae^a_{\mu}$.
  Since $L_X e^a=0$ one has $\nabla_X e^a_{\mu}=-e^{a\nu}\nabla_{\mu} X_{\nu}=(Je^a)_{\mu}$ so that $\go_X^{ab}=\bra e^a,Je^b\ket$. Thus
  \bea L^s_X=X\cdotp\partial+\frac14\bra e^a,Je^b\ket\Gc^{ab}-\frac14J_{\mu\nu}\Gc^{\mu\nu}=X\cdotp\partial\,.\nn\eea
On the other hand, similarly to what we did in \eqref{LC_n},
one can modify the spin connection   \bea D\to D^{(n)}=D-\frac{n}{4}gX \slashed{J}\,,\nn\eea
  so that $D^{(1)}_X$ will coincide with $L_X^s=\partial_{\ga}$ when \eqref{choice_precarious} holds.

In what follows we shall use $D^{(1)}$ for the connection and check its holonomy along the circle fibre. Note that $D^{(1)}_X=L_X^s=\partial_{\ga}$ is a local expression and does not imply that the holonomy is 1.
  Indeed we have ignored the following global issue.
  Locally one adjusts the trivialisation of $TM$ to satisfy \eqref{choice_precarious}, but the adjustments may not be liftable to $\FR{spin}$.  A trivialisation of the spin bundle that it is independent of the $S^1$-fibre might not exist. In particular, when the entire  fibre does not lie in one patch, there could be a nontrivial transition function when going around the circle.
  An instance where this obstruction occurs is $S^{4k+1}\rightarrow \mathbb P^{2k}$.  In such case, the reduction of the bundle cannot proceed straightforwardly, but one may instead push down the spin bundle into a $\spinc$ bundle.

\subsection{Reduction of the Killing spinor on Sasaki-Einstein manifolds}\label{sec_Rots}

In this subsection we further specialize to the case where the 5D manifold is Sasaki-Einstein (SE). On any such manifold one can find Killing spinors
\bea
\label{KSSE}
	D_m\xi^1=-\frac{i}{2}\Gc_m\xi^1~,~~~~D_m\xi^2=+\frac{i}{2}\Gc_m\xi^2~,
\eea
and we are interested in establishing under which conditions these Killing spinors can be pushed down to the base.
We refer the reader to the appendix of \cite{2016arXiv160802966Q} for a review of the Sasaki-Einstein geometry that we need (one may also consult \cite{2010arXiv1004.2461S} for a more comprehensive view).

Consider a SE manifold with metric $g_{\mu\nu}$. We will make use of the Reeb vector field $\reeb$ and the contact 1-form $\gk=g \reeb$ satisfying $\iota_{\sreeb}d\gk=L_{\sreeb}d\gk=0$. We also need the complex structure $J$, acting on the plane transverse to $\reeb$, which is related to $\reeb$ by $\nabla_Y\reeb=JY$.
For the Sasakian geometry,  $J$ induces a K\"ahler structure transverse to the Reeb, i.e. $J$ satisfies the integrability condition
\bea
&&\bra Z,(\nabla_XJ)Y\ket=-\gk(Z)\bra X,Y\ket+\bra Z,X\ket\gk(Y)\label{integrability}~,
\eea
where $\bra-,-\ket$ is the inner product using the metric. We will use the same letter $J$ for the complex structure as well as for the 2-form $gJ$.
Finally the Sasaki-Einstein condition further implies that
\bea
R_{mn}=4g_{mn}~.\label{Einstein}
\eea

The Killing spinor equations \eqref{KSSE} can be solved using the approach of \cite{FriedrichKath}. Consider the rank 1 subbundle $W_{\mu}$ of the spin bundle $W$ consisting of $\psi$ satisfying
\bea
	{\reeb}\psi=-\psi~,~~~\big(\mu JX-\frac{i}{2}(1+{\reeb})X\big)\psi=0~,~~~\forall X\in\Gc(TM)~,\label{sub_bundle}
\eea
where $\mu=\pm1$ and we have omitted $\Gc$ whenever  Clifford multiplication is obvious.
One then defines a connection for $W_{\mu}$
\bea
	\tilde D_X=D_X+\frac{i\mu}{2}X~.\nn
\eea
This is indeed a connection, i.e. it preserves $W_{\mu}$, and furthermore it is flat when restricted to $W_{\mu}$ (more details can be found in \cite{Qiu:2013pta}).
If the SE manifold $M$ is simply connected, there is a unique (up to a constant multiple) solution to
\bea
	D_X\psi=-\frac{i\mu}{2}X\,\psi~,~~~~\mu=\pm1\,.\nn
\eea
Apart from \eqref{sub_bundle}, the solution satisfies
\bea
	 \slashed{J}\psi=-4i\mu\psi~.\label{spin_form_degree}
\eea

Since a section of the spin bundle can be reduced if $L_X^ss=D_X^{(1)}s=0$, we now turn to compute the Lie derivative of a Killing spinor.
The spinor Lie derivative along a Killing vector can be shown to satisfy the important properties
\bea
\label{killingmu}
	[L_X^s,Y\cdotp\Gc]=[X,Y]\cdotp\Gc~,~~[L_X^s,L_Y^s]=L^s_{[X,Y]}~,~~[L_X^s,D_Y]=D_{[X,Y]}~.
\eea
Using these one sees that the Lie derivative of a Killing spinor $\psi$ along a Killing vector $X$ is also Killing.  Using \eqref{sub_bundle}, \eqref{spin_form_degree}  one can show
\bea
\label{spinliepsi}
 	L_X^s\psi=\big(\frac{i\mu}{2}\bra X,\reeb\ket-\frac{i\mu}{8}\bra dX,J\ket-\frac14(\gk\wedge L_X\reeb)\cdotp\Gc\big)\psi\,.
 \eea
In the formulae above we routinely identify vectors with their dual 1-form and vice versa.

For the next subsection we can assume that the Killing vector $X$ commutes with $\reeb$
\bea
	L_X\reeb=0\,.\nn
\eea
In this case $L_X^s$ preserves the rank 1 subbundle \eqref{sub_bundle} so that $L_X^s\psi=i\mu f_X\psi$ for some constant $f_X$ (the details are in appendix B of \cite{Qiu:2013pta} \footnote{if one tries to check the calculation there, pay attention to the typo: the displayed equation before (87), $L_X^s=D_X-1/8\nabla_{[m}X_{n]}\Gc^{mn}$ should be $L_X^s=D_X+1/8\nabla_{[m}X_{n]}\Gc^{mn}$.}). The last term in \eqref{spinliepsi} is zero and hence
\bea
	f_X=\frac{1}{2}\bra X,\reeb\ket-\frac{1}{8}\bra dX,J\ket\,.\label{f_X}
\eea
Knowing that $f_X$ is a constant, this formula can evaluated at a convenient point.

As we stressed above we also need to compute the holonomy of $L_X^s$. This is best done without resorting to local computation. To this end we will introduce a spinor representation using horizontal forms (see also section 2.6 of \cite{2008CMaPh.280..611M}).

Using the Reeb vector $\reeb$, one can define the horizontal forms
\bea
	\go\in\Go^{\sbullet}_H(M)~~\textrm{iff}~~\iota_{\sreeb}\go=0~.\nn
\eea
Using the transverse complex structure $J$ one further decomposes $\Go^p_H=\oplus_{i+j=p}\Go^{i,j}_H$.
Now one can define the so called canonical $\spinc$-structure. Consider
\bea
W_{can}=\small{\textrm{$\bigoplus$}}\,\Omega_H^{0,\sbullet}(M)~.\label{can_spin_bundle}
\eea
One has a representation of the Clifford algebra on $W_{can}$: let $\psi$ be any section of $W_{can}$ and $\chi$ a 1-form, define the Clifford action
\bea
\chi\cdotp\psi=\Bigg\{
                             \begin{array}{cc}
                               \sqrt2 \chi\wedge\psi & \chi\in\Omega_H^{0,1}(M) \\
                               \sqrt2 \iota_{g^{-1}\chi}\psi & \chi\in\Omega_H^{1,0}(M) \\
                               (-1)^{\deg+1}\psi & \chi=\kappa \\
                             \end{array}.\label{can_spin_rep}
\eea
This in fact defines a priori a $\spinc$-structure whose characteristic line bundle (see chapter 5 in \cite{Salamon}) is the anti-canonical line bundle associated with the complex structure $J$.
This latter line bundle is trivial on $M$ for simply connected SE manifolds. Hence its square root is also a (trivial) line bundle, so that the $\spinc$ is in fact spin\footnote{In general, SE manifolds with $H_1(M,\BB{Z})_{tor}=0$, are spin (see theorem 7.5.27 in \cite{BoyerGalicki}).}.
With this concrete representation, the first condition in \eqref{sub_bundle} says that $\psi$ is in $\Go_H^{0,2k}$ while the second tells whether its (0,0) or (0,2) depending on $\mu$ (as also does \eqref{spin_form_degree}).

We mentioned above the characteristic line bundle of a $\spinc$-structure, which in our case is generated by $\Go_H^{0,2}$. For SE geometry this line bundle is trivialised by a nowhere vanishing section $\bar\varrho$ of $\Go^{0,2}_H$. Thanks to the triviality, one can identify $W\simeq \oplus\Go_H^{0,\sbullet}$. However one needs to remember that this is a statement at the level of topology, while for covariant derivatives, spinor Lie derivatives etc., the isomorphism $W\simeq \oplus\Go_H^{0,\sbullet}$ has a non-trivial effect. This is especially important for reducing the spin bundle, which we turn to next.

Pick a Killing spinor $\psi$ satisfying \eqref{killingmu} with $\mu=1$. Using this spinor one can write all other spinors by Clifford multiplying $\psi$ with $\Go_H^{0,\sbullet}$
\bea
	\xi=\eta\wedge\psi\in W~,~~~\eta\in\Go^{0,\sbullet}_H~.\nn
\eea
Let now $X=\partial_{\ga}$ be the vector field of the $U(1)$-fibration.
As we proved in section \ref{sec_red_spin}, if the vielbein on $M$ is invariant under $X$ then $D^{(1)}_X=L_X^s$, and so
\bea
	D^{(1)}_X(\eta\wedge \psi)=L^s_X(\eta\wedge \psi)=(L_X\eta)\wedge\psi+\eta\wedge L_X^s\psi~.\nn
\eea
As $X$ is induced from a circle action on $M$, the $L_X\eta$ term has the right period, so whether or not $D^{(1)}_X$ has trivial holonomy hangs on the last term $L_X^s\psi$.  For our purposes $L_X^s\psi=if_X\psi$ for a constant $f_X$. Thus $f_X\in\BB{Z}$ ensures that the holonomy is trivial.  When this  condition fails, the reduction is not impossible, but rather one might need to adjust the spin connection.

\subsection{Specialising to toric Sasaki-Einstein}\label{sec_SttSE}
In the toric setting $M$ has isometry $U(1)^3$ generated by $e_a,\,a=1,2,3$, and the Reeb vector is a constant combination of the three $U(1)$'s: $\reeb=\sum_{a=1}^3\reeb^ae_a$.
We also seek another combination $X=\sum_{a=1}^3X^ae_a,~X^a\in\BB{Z}$, so that $X$ has closed orbits of period $2\pi$ and $M$ is a regular foliation by the orbits. In other words $M$ is a $U(1)$-fibration over a 4D base $B$.

Let us investigate what requirement do we have on $X^a$ so that $f_X$ in \eqref{f_X} vanishes, that is $L_X^s\psi=0$. Note that in the current setting $L_X\reeb=0$ trivially.
Denoting with $\vec X=(X^1,X^2,X^3)$ the 3-vector parametrizing $X=X^ae_a$, we decompose (non-uniquely)
\bea
	\vec X=\sum_{i=1}^{\tt n}\gl_i\vec v_i~,\nn
\eea
where $\tt n$ is the total number of faces of the moment map cone of $M$. In fact it is possible to choose $\gl_i\in\BB{Z}$ since
\bea
	\pi_1(M)=0~~ \Leftrightarrow~~\textrm{span}\,\bra \vec v_1,\cdots\vec v_{\tt n}\ket=\BB{Z}^3~.\nn
\eea
Each $\vec v_i$ represents a $U(1)$ that vanishes of degree 1 at face $i$, and so by a local computation
\bea
	\bra dv_i,J\ket=-2~ , \nn
\eea
where we also use $v_i$ to denote the vector field $\sum_av_i^ae_a$.
This shows $f_{v_i}=1/2$ and $f_X=(1/2)\sum\gl_i$.
To formulate this quantity geometrically, we note that the SE condition implies that there exists a $\vec\xi\in\BB{Z}^3$ such that $\vec\xi\cdotp \vec v_i=1,~\forall i$. Then
\bea
	f_X=\frac12\sum_{i=1}^{\tt n}\gl_i=\frac12 \vec X\cdotp \vec\xi~.\nn
\eea
Hence  the spin bundle is reducible to $B$ if $\vec X\cdotp\xi=2\BB{Z}$.
Note that since such $\vec \xi$ must be primitive (its components have gcd 1), one may assume that $\vec\xi=[1,0,0]$.

The geometrical meaning of this condition is this: as the metric cone $C(M)$ over $M$ is a Calabi-Yau, it has a holomorphic volume form $\Go$.
From this one can construct a nowhere vanishing section $\varrho=\iota_{\sreeb}\Go\in\Go_H^{0,2}$.
Then $f_X$ is the charge of $\varrho$ under $X$.
For a geometry with $f_X=0$ we can then simply declare that the spin bundle on $M$ can be reduced to that of $B$.

\section{A classification result}\label{sec_Acr}
We first set up some nomenclature.
The geometry of $M$ is entirely encoded by a moment map cone $C_{\mu}(M)\subset\BB{R}^3$.
Let $\vec v_i\in\BB{Z}^3,~i=1,\cdots\tt m$ be the (primitive) inward pointing normals of the $\tt n$ faces of $C_{\mu}$.
Let $\reeb=\sum_{a=1}^3\reeb^ae_a$ be the Reeb vector field, and assume that $\vec\reeb$ is within the dual cone $C_{\mu}^{\vee}$, i.e.
\bea
	\vec\reeb=\sum_{i=1}^{\tt m}\gl_i\vec v_i~,~~~\gl_i>0~.\label{dual_cone}
\eea
With this assumption, the plane (where $y^a$ are the coordinate of $\BB{R}^3$)
\bea
	\big\{\vec y\in\BB{R}^3|\vec \reeb\cdotp \vec y=\frac12\big\} \nn
\eea
intersects $C_{\mu}$ at a convex polygon $\Gd_{\mu}$ if $C_{\mu}$ is convex.
Then the geometry of $M$ is that of a $U(1)^3$ fibration over $\Gd_{\mu}$, except that at each faces of $\Gd_{\mu}$, a certain $U(1)$ becomes degenerate.
More concretely if the normal associated with face $i$ is $\vec v_i$, then the $U(1)$ given by $\sum_{a=1}^3v^a_ie_a$ degenerates. \begin{figure}[h]
\begin{center}
\begin{tikzpicture}[scale=.5]
\draw [-,blue] (-1,-1) -- node[below] {\small$1$} (1,-1) -- node[right] {\small$2$} (1.5,.5) -- node[right] {\small$3$} (-.5,2) -- node[left] {\small$4$} (-1.5,1) -- node[left] {\small$5$} (-1,-1);
\draw (1,-1.6) ellipse (.3 and .6);
\draw (-1,-1.4) ellipse (.2 and .4);
\draw (1.5,0.8) ellipse (.18 and .3);
\draw (-0.5,2.3) ellipse (.1 and .3);
\draw (-2,1) ellipse (.5 and .2);
\end{tikzpicture}
\caption{The polygon $\Gd_{\mu}$. The circles represent the closed Reeb orbits. }\label{fig_toric_polygon}
\end{center}
\end{figure}
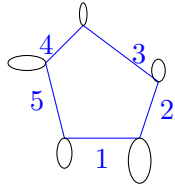
In particular, at the intersection of faces, only one $U(1)$ remains non-degenerate and its orbit is a closed Reeb orbit.
These are the only loci for closed Reeb orbits if $\vec\reeb$ is chosen generically.

We assume the following for $C_{\mu}$ (see \cite{2001math......7201L})
\begin{enumerate}
  \item Convexity (where $\vec v_{{\tt n}+1}:=\vec v_1$):
  \bea [\vec v_i,\vec v_{i+1},\vec v_k]=(\vec v_i\times \vec v_{i+1})\cdotp\vec v_k>0~,~~\forall k\neq i,i+1~.\label{convex}\eea
  \item Goodness\footnote{This condition was phrased in \cite{2001math......7201L} as: $\BB{Z}^3\cap\textrm{span}_{\BB{R}}\bra\vec v_i,\vec v_{i+1}\ket=\textrm{span}_{\BB{Z}}\bra\vec v_i,\vec v_{i+1}\ket$ for all $i$.}: $\exists \vec n_i\in\BB{Z}^3$, such that $[\vec n_i,\vec v_i,\vec v_{i+1}]=1$, $\forall i$.
  \item Gorenstein: $\exists \vec\xi\in\BB{Z}^3$ such that $\vec\xi\cdotp\vec v_i=1,~\forall i$,  see \cite{Martelli:2005tp}.
\end{enumerate}
The first condition is for compactness of $M$, the second for smoothness while the third guarantees the existence of a holomorphic volume form $\Go\in\Go^{3,0}(C(M))$, where $C(M)$ is the metric cone over of $M$.
In other words, the Gorenstein condition is the Calabi-Yau condition for the cone over $M$.

One may assume without loss of generality that $\vec\xi=[1,0,0]$, and so we write
\bea \vec v_i=\left[
                \begin{array}{c}
                  1 \\
                  x_i \\
                  y_i \\
                \end{array}
              \right].\label{used_I}\eea
Next let $\sum_{a=1}^3X^ae_a$ represent the vector field $X$, we then have the correspondence
\begin{proposition}
  The 5D toric Sasaki-Einstein manifolds with a freely acting $U(1)$ that preserves $\Go$ are in 1-1 correspondence (up to $SL(3,\BB{Z})$ transformation) with convex 2D-polygons whose vertices $(x_i, y_i)$ are in $\BB{Z}^2$, and furthermore the $x$-coordinate of neighbouring vertices must differ by $\pm1$. This implies that the number of vertices is even ${\tt m}= 2{\tt n}$. If one requires $\pi_1=0$, then all the $y_i$'s should have greatest common divisor 1.
\end{proposition}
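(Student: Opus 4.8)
The plan is to transport every hypothesis --- the inequalities \eqref{convex}, Goodness and Gorenstein for $C_\mu$, together with the conditions on $X$ (integrality of $\vec X$, closed orbits of period $2\pi$, freeness of the action, and $L_X\Go=0$) --- into elementary statements about the planar point configuration $p_i:=(x_i,y_i)$. I would begin by fixing the residual gauge. Having used $SL(3,\BB{Z})$ to set $\vec\xi=(1,0,0)$, so that $\vec v_i=(1,x_i,y_i)$ as in \eqref{used_I}, the stabiliser of $\vec\xi$ in $SL(3,\BB{Z})$ is the subgroup of matrices with first row $(1,0,0)$, which acts on the $(x,y)$-plane as the affine group $\BB{Z}^2\rtimes SL(2,\BB{Z})$. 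Invariance of the holomorphic volume form forces $f_X=\tfrac12\vec X\cdotp\vec\xi=0$, i.e. $X^1=0$; a closed orbit of period $2\pi$ forces $\vec X$ primitive, i.e. $\gcd(X^2,X^3)=1$. Using the $SL(2,\BB{Z})$ factor of the residual gauge I would rotate $(X^2,X^3)$ to $(0,1)$, so from now on $\vec X=(0,0,1)$; the leftover gauge freedom is then generated by the plane translations and the single shear $(x,y)\mapsto(x,y+rx)$.

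Then comes the dictionary. Writing $\vec v_j=(1,p_j)$, the bracket $[\vec v_i,\vec v_{i+1},\vec v_k]$ equals twice the signed area of the triangle $p_ip_{i+1}p_k$, so \eqref{convex} says exactly that the $p_i$ are the vertices of a genuine convex lattice polygon, traversed counter-clockwise. A short computation gives $\vec v_i\times\vec v_{i+1}=(x_iy_{i+1}-x_{i+1}y_i,\;y_i-y_{i+1},\;x_{i+1}-x_i)$, whose first component is a $\BB{Z}$-combination of the last two, so its gcd equals $\gcd(x_{i+1}-x_i,y_{i+1}-y_i)$; hence Goodness is the statement that every edge of the polygon is primitive. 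Gorenstein holds automatically with $\vec\xi$ in the chosen form. It remains to translate freeness: the nontrivial orbit strata of the $T^3$-action lie over the faces of $C_\mu$, and it suffices to control freeness over the codimension-$2$ faces (the closed Reeb orbits), whose stabiliser is the $2$-torus $T_{v_i,v_{i+1}}$ and which dominate the codimension-$1$ strata with stabiliser $T_{v_i}\subset T_{v_i,v_{i+1}}$. By Goodness $(\vec v_i,\vec v_{i+1})$ extends to a $\BB{Z}$-basis $(\vec v_i,\vec v_{i+1},\vec n_i)$ of $\BB{Z}^3$; writing $\vec X=a\vec v_i+b\vec v_{i+1}+c\,\vec n_i$ one finds $U(1)_X\cap T_{v_i,v_{i+1}}$ is cyclic of order $|c|=|\det(\vec X,\vec v_i,\vec v_{i+1})|$, so the action is free iff $|\det(\vec X,\vec v_i,\vec v_{i+1})|=1$ for all $i$; expanding this determinant with $\vec X=(0,0,1)$ collapses it to $|x_{i+1}-x_i|=1$, which in turn implies Goodness. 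Conversely, any convex lattice polygon with unit $x$-steps yields a good Gorenstein toric cone; such a cone admits a Ricci-flat K\"ahler cone metric, so its link is a (possibly irregular) toric SE manifold (see \cite{Martelli:2005tp}), and on it $\vec X=(0,0,1)$ is a free $U(1)$ with $f_X=0$ and $L_X\reeb=0$. Hence the correspondence is onto and, modulo the residual affine symmetry above, bijective.

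For the two riders: since $\sum_i(x_{i+1}-x_i)=0$ and each term is $\pm1$, there are equally many $+1$ and $-1$ steps, so the number of vertices ${\tt m}$ is even; writing ${\tt m}=2{\tt n}$, the sublevel sets $\{x\le t\}\cap P$ are convex hence connected, so $x$ is unimodal along the boundary and in fact ${\tt n}=x_{\max}-x_{\min}$. For the fundamental group, Section~\ref{sec_SttSE} gives $\pi_1(M)=0\Leftrightarrow\mathrm{span}_{\BB{Z}}\langle\vec v_1,\dots,\vec v_{\tt m}\rangle=\BB{Z}^3$; using the leftover translation to put $p_1=(0,0)$ and the leftover shear to put $p_2=(\pm1,0)$, this reduces to $\mathrm{span}_{\BB{Z}}\langle p_1,\dots,p_{\tt m}\rangle=\BB{Z}^2$, and with $(\pm1,0)$ already among the $p_i$ this is just $\gcd_i(y_i)=1$.

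The one genuinely substantive step is the freeness analysis: singling out the closed Reeb loci as the critical orbit stratum, using Goodness to see that the stabiliser there is the \emph{connected} torus $T_{v_i,v_{i+1}}$, and identifying the order of $U(1)_X\cap T_{v_i,v_{i+1}}$ with $|\det(\vec X,\vec v_i,\vec v_{i+1})|$. Everything else is bookkeeping inside $SL(3,\BB{Z})$, elementary lattice arithmetic, and one appeal to the existence of Calabi-Yau cone metrics on good Gorenstein toric cones.
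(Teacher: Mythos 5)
Your proof is correct and follows essentially the same route as the paper: gauge-fix $\vec\xi=(1,0,0)$ and $\vec X=(0,0,1)$, translate freeness into $\det[\vec X,\vec v_i,\vec v_{i+1}]=\pm1$, i.e.\ $x_{i+1}-x_i=\pm1$, and read convexity as the convex-polygon condition. You additionally supply details the paper only asserts (the stabiliser computation behind the freeness criterion, the residual affine gauge group, and the converse via existence of Ricci-flat cone metrics on good Gorenstein cones), all of which check out.
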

\begin{proof}
That $X$ acts freely means that at the intersection of face $i,i+1$, one has
\bea
	\det[\vec X,\vec v_i,\vec v_{i+1}]=\vec X\cdotp(\vec v_i\times\vec v_{i+1})=\pm1~,\label{freeness}
\eea
so that not only the vector field $X$ is nowhere zero, but its stability group is trivial for all points.
This also ensures the smoothness of $M$ since \eqref{freeness} implies goodness.

We focus on the case $f_X=\vec\xi\cdotp \vec X=0$, then with a further $SL(3,\BB{Z})$ transformation one can assume
\bea
 \vec X=\left[
                \begin{array}{c}
                  0 \\
                  0 \\
                  1 \\
                \end{array}\right]~,\nn
 \eea
while preserving all the other assumptions we have made so far\footnote{Keep in mind that if $\vec v_i$ is transformed with $g\in SL(3,\BB{Z})$, then $\vec\xi$ is transformed with $g^T$.}.
With these assumptions \eqref{freeness} says
\bea
	x_i-x_{i+1}=\pm1~,\label{freeness_I}
\eea
and the convexity \eqref{convex} says
\bea
\det\left[
  \begin{array}{cc}
    x_i-x_k & x_{i+1}-x_i \\
    y_i-y_k & y_{i+1}-y_i \\
  \end{array}\right]>0~.\label{convex_I}
\eea
It is not difficult to see that the solution to \eqref{freeness_I}, \eqref{convex_I} are labelled by a convex polygon on the $x-y$ plane, for which the $x$-coordinates of successive vertices differ by $\pm1$.

Finally for the toric manifolds considered $\pi_1=\BB{Z}^3/\textrm{span}\,\bra\vec v_1,\cdots, \vec v_{2\tt n}\ket$, so from the explicit form of the $\vec v_i$'s, this is realised if $\gcd(y_i)=1$.
\end{proof}

To fix the $SL(3,\BB{Z})$ redundancy, we enforce
  \begin{enumerate}
    \item the entire polygon lies to the right of $y$ axis
    \item vertex 1 and 2 are on $(0,0)$ and $(1,0)$
    \item $\det[\vec e_1,\vec{e}_{{\tt n}+1}]\geq0$, and if $\det[\vec e_1,\vec{e}_{{\tt n}+1}]=0$, then $\det[\vec e_2,\vec{e}_{{\tt n}+2}]\geq0$ and so on,
  \end{enumerate}
where $\vec e_i$ denotes the edge from vertex $i$ to vertex $i+1$.

Indeed using a cyclic permutation, one fixes the vertex with the smallest $x$-value as the $1^{st}$ vertex, satisfying item 1 one the list above.
The $SL(3,\BB{Z})$ redundancy now consists of lower triangular matrices only. With these, one can set $(x_1,y_1)=(0,0)$, and a further transformation sets $(x_2,y_2)=(1,0)$, satisfying item 2 of the list. If the resulting polygon does not satisfy item 3 we can act as follows. First flip the sign of all $x_i,y_i$ ($X$ is now $[0;0;-1]$, but this does not affect anything). We can now repeat the steps above and make the polygon satisfy item 1, 2 and 3. This corresponds essentially to turning the polygon around so that the $({\tt n}+1)^{th}$ vertex (the right most one) becomes the first one.  The first two pictures of figure \ref{fig_hexagon} provide an explicit example of this flip.

\begin{example} [$Y^{p,q}$-spaces] \label{example:Ypq}
  Take a quadrangle with vertices placed at $[0,0],[1,0],[2,p-q],[1,p]$, with $p>q> 0$ and $\gcd(p,q)=1$,  i.e. the normals are
  \bea  \label{eq:Ypqnormals}
  [\vec v_1,\cdots,\vec v_4]=\left[
       \begin{array}{cccc}
         1 & 1 & 1 & 1 \\
         0 & 1 & 2 & 1 \\
         0 & 0 & p-q & p \\
       \end{array}\right]\,.\label{ex_Ypq}\eea
Note that the metric cone in this case can be obtained by a K\"ahler reduction of $\BB{C}^4$ with a $U(1)$ of weight $[-p,p+q,-p,p-q]$, c.f. section 4 of \cite{Boyer:2011ia}.
From the explicit metric for $Y^{p,q}$  \cite{Gauntlett:2004yd}, that we write down in appendix \ref{app:Ypqexample}, the $U(1)$ fibration is obvious.
In contrast, $L^{a,b,c}$ spaces \cite{Cvetic:2005ft} do not offer any free $U(1)$ and, if one writes down the normals, one sees that they do not fall into our classification.

Figure \ref{fig_hexagon} shows the normals of $Y^{2,1}$, as well as a hexagon example. For the hexagon, from the vertices we read off the normals
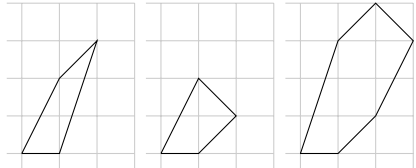
\begin{figure}[h]
\begin{center}
\begin{tikzpicture}[scale=0.5]
\draw [step=1,thin,gray!40] (-.4,-.4) grid (3,4);
\draw[-] (0,0) -- (1,0) -- (2,3) -- (1,2) -- (0,0);
\end{tikzpicture}
\begin{tikzpicture}[scale=0.5]
\draw [step=1,thin,gray!40] (-.4,-.4) grid (3,4);
\draw[-] (0,0) -- (1,0) -- (2,1) -- (1,2) -- (0,0);
\end{tikzpicture}
\begin{tikzpicture}[scale=0.5]
\draw [step=1,thin,gray!40] (-.4,-.4) grid (3,4);
\draw[-] (0,0) -- (1,0) -- (2,1) -- (3,3) -- (2,4) -- (1,3) -- (0,0);
\end{tikzpicture}\caption{The first two are equivalent polygons representing the $Y^{2,1}$ space, and the last is a hexagon example}\label{fig_hexagon}
\end{center}
\end{figure}
\bea
 [\vec v_1,\cdots,\vec v_6]=\left[
         \begin{array}{cccccc}
           1 & 1 & 1 & 1 & 1 & 1\\
           0 & 1 & 2 & 3 & 2 & 1\\
           0 & 0 & 1 & 3 & 4 & 3\\
         \end{array}\right]\label{ex_hex}.
   \eea

Here is an octagon example
\bea
[\vec v_1,\cdots,\vec v_8]=\left[
         \begin{array}{cccccccc}
           1 & 1 & 1 & 1 & 1 & 1 & 1 & 1\\
           0 & 1 & 2 & 3 & 4 & 3 & 2 & 1\\
           0 & 0 & 1 & 3 & 6 & 6 & 5 & 3\\
         \end{array}\right]~.\nn
\eea
Note that the polygons appearing here should not be confused with the polygons $\Gd_{\mu}$.
\end{example}

\subsection{The geometry of the base}\label{sec_tgotb}
We fix the orientation of the 5-manifold by picking the volume form
\bea
	\textrm{Vol}_M=\frac18 \gk \wedge d\gk\wedge d\gk~.\nn
\eea
Since the vector field $X$ is  everywhere nonzero we fix the volume form of $B$ as
\bea
	\textrm{Vol}_B=\iota_X\textrm{Vol}_M~.\label{vol_B}
\eea
At the intersection of two faces, there will be only one nondegenerate $U(1)$.
Thus, $\reeb$ and $X$ both being linear combinations of $U(1)$'s, must (anti)align at these loci.
At the intersection of face $i$ and $i+1$, the three weights $\vec\reeb,\vec v_i,\vec v_{i+1}$ always form a right-handed base. Indeed from the condition \eqref{dual_cone} one has
\bea
	[\vec \reeb,\vec v_i,\vec v_{i+1}]=\sum_{i=1}^{\tt n}\gl_j[\vec v_j,\vec v_i,\vec v_{i+1}]>0~.\nn
\eea
The right hand side is greater than zero from \eqref{convex}.
On the other hand $[\vec X,\vec v_i,\vec v_{i+1}]=\pm1$, thus we conclude
\bea
&&[\vec X,\vec v_i,\vec v_{i+1}]=+1~,~~~\reeb~\textrm{and}~X~\textrm{parallel},\nn\\
&&[\vec X,\vec v_i,\vec v_{i+1}]=-1~,~~~\reeb~\textrm{and}~X~\textrm{anti-parallel,}\nn
\eea
at the locus corresponding to the intersection of face $i$ and $i+1$.
Note that in the polygon picture of the normals, the $+1$ occurs for the sides of the polygon where the x-coordinate increase, and the $-1$ when it decreases (going around the polygon counter-clockwise).
So they will occur the same number of times, which also is a way of seeing that $f_X = 0$.

Due to this misalignment of $X$ with respect to $\reeb$ across the manifold, the orientation of $B$ determined according to \eqref{vol_B} does not always agree with that of $d\gk\wedge d\gk/8$.
At a corner where $[\vec X,\vec v_i,\vec v_{i+1}]=-1$, the orientation of $B$ is opposite to that of the transverse plane field of $M$.
This will have important effect when we consider instantons, since the (anti-)self-duality condition depends on the choice of volume form, more about this in section \ref{sec_Datis}.

\subsection{Intersection form and geometry of $B$}\label{sec_IfagoB}
To understand the geometry of the base manifold, we compute the pairing of $H_2(B,\BB{Z})$.
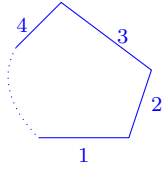
\begin{figure}[h]
\begin{center}
\begin{tikzpicture}[scale=.6]
\draw [-,blue] (-1,-1) -- node[below] {\scriptsize$1$} (1,-1) -- node[right] {\scriptsize$2$} (1.5,.5) -- node[right] {\scriptsize$3$} (-.5,2) -- node[left] {\scriptsize$4$} (-1.5,1);
\draw [dotted,blue] (-1.5,1) to [out=-120, in =135] (-1,-1);
\end{tikzpicture}
\caption{The momentum polytope of a 5D toric contact manifold.}\label{fig_geom_base}
\end{center}
\end{figure}
Figure \ref{fig_geom_base} represents the base of the moment map polytope of the 5D toric manifold. Taking a further quotient along $X$ gives the base $B$.
Note that in the classification above $B$ is \emph{not} toric K\"ahler, we are merely using the polytope for $M$ to visualise the geometry of $B$.

The edges in figure \ref{fig_geom_base} generate $H^2(M,\BB{Z})$; in fact each edge corresponds to a torus invariant 3D submanifold (some lens space).
Taking the quotient along $X$, we get a generating set for $H^2(B,\BB{Z})$.
There are relations among the generators. Denoting by $[x_i]\in H^2(B,\BB{Z})$ the generator associated with edge $i$, we have
\bea
\vec r\cdotp\sum_{i=1}^{2\tt n}\vec v_i[x_i]=0~,~~~\forall\vec r\in\BB{Z}^3~,~{\rm such}~{\rm that}~~\vec r\cdotp\vec X=0~.\label{rel_div}
\eea

As we saw in the last section, one can assume that $\vec X$ has been set to be $[0;0;1]$ and the normals have been put in the standard form
\bea \label{eq:normalscanonical}
[\vec v_1,\cdots,\vec v_{\tt n},\vec v_{{\tt n}+1},\vec v_{2{\tt n}}]=\left[
\begin{array}{cccccccc}
    1& 1 & 1 & \cdots & 1 & 1 & 1 & 1\\
    0 & 1 & 2 & \cdots & {\tt n} & {\tt n}-1 & \cdots & 1\\
    0 & 0 & * & * & * & * & * & *\\
\end{array}\right],\label{std_form}\eea
then the relation \eqref{rel_div} is simply
\bea \sum_{i=1}^{2\tt n}v^a_i[x_i]=0~,~~~a=1,2~.\nn
\eea
Looking at the first and second row of \eqref{std_form}, we can take $[x_i],~i=3,\cdots,2{\tt n}$ as a free generating set of $H^2(B,\BB{Z})$.

The intersection form of $H^2(B,\BB{Z})$ can be computed as the intersection number of the $[x_i]$'s, which is
\bea
 \bra [x_i],[x_{i+1}]\ket&=&\sgn[\vec X,\vec v_i,\vec v_{i+1}]\nn\\
\bra [x_i],[x_i]\ket&=&-\sgn[\vec X,\vec v_{i-1},\vec v_i]\,\sgn[\vec X,\vec v_i,\vec v_{i+1}]\,[\vec X,\vec v_{i-1},\vec v_{i+1}]\label{intersection_form}\eea
and zero otherwise. Here the orientation we used for the pairing is that of \eqref{vol_B}.

\begin{example}
\label{YPQ}
Take the $Y^{p,q}$ spaces as an example, the normals are in \eqref{ex_Ypq}, and so the paring matrix between $[x_3,x_4]$ is
\bea \bra-,-\ket_{Y^{p,q}}=\left[
       \begin{array}{cc}
         0 & -1 \\
         -1 & 2 \\
       \end{array}\right]\nn\eea
This pairing matrix is equivalent to the standard form
\bea H=\left[
       \begin{array}{cc}
         0 & 1 \\
         1 & 0 \\
       \end{array}\right],\label{std_int_form}\eea
that is, there is a matrix $g\in SL(2,\BB{Z})$ such that $\bra g-,g-\ket=H$. Note that $H$ is the intersection form of $S^2\times S^2$.

Take now a hexagon example \eqref{ex_hex}
\bea \bra-,-\ket=\left[
  \begin{array}{ccccc}
    -2 & 1 \\
    1 & 0 & -1\\
      & -1 & 2 & -1\\
     &  & -1 & 2\\
             \end{array}\right]\nn\eea
which is equivalent to $H\oplus H$, i.e. the intersection form of $\#_2(S^2\times S^2)$.

By the notation $\#_k(S^2\times S^2)$, we mean the connected sum of $k$ copies of $S^2 \times S^2$.
The connected sum of two manifolds joins them together near a chosen point of each, i.e. we delete a ball inside each manifold and glue together the resulting boundary spheres. Although the construction depends on the choice of balls, the result is unique up to diffeomorphism.

\end{example}
\begin{proposition}
  All manifolds appearing in the classification above are homeomorphic to $\#_k(S^2\times S^2)$ with $k= {\tt n} +1$.
\end{proposition}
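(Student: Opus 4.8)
The plan is to pin down the homeomorphism type of the closed oriented $4$-manifold $B$ by computing its intersection form and then appealing to Freedman's classification of simply connected topological $4$-manifolds. Freedman's theorem says that such a manifold is determined up to homeomorphism by its intersection form together with the Kirby--Siebenmann invariant, and that for an \emph{even} form the latter is forced (it equals $\sigma/8\bmod 2$), so that the form alone determines the manifold. Since $\#_k(S^2\times S^2)$ carries the even unimodular form $\oplus^k H$ with $H$ as in \eqref{std_int_form}, it is enough to show that the intersection form of $B$ is isomorphic to $\oplus^{{\tt n}+1}H$.

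First I would check that $B$ is closed, oriented and simply connected. Compactness and orientability are built into the construction, $B$ being the base of the circle fibration $S^1\to M\to B$ with $M$ compact and oriented by \eqref{vol_B}. Simple-connectedness comes from the homotopy long exact sequence of this fibration: the piece $\pi_1(M)\to\pi_1(B)\to\pi_0(S^1)$, with $\pi_1(M)=0$ (ensured by $\gcd(y_i)=1$) and $\pi_0(S^1)=0$, forces $\pi_1(B)=0$. Poincar\'e duality then guarantees that the intersection form on $H_2(B;\BB{Z})$ is unimodular.

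Next I would establish the two structural properties that, together with unimodularity, identify the form. Evenness follows from $B$ being spin: we work in the case $f_X=\vec\xi\cdotp\vec X=0$ of Section~\ref{sec_SttSE}, where the spin bundle of $M$ descends to a genuine spin bundle on $B$, so $w_2(B)=0$ and the form is even. The decisive point is that the signature vanishes, $\sigma(B)=0$. For this I would use the $\pm 1$ dichotomy of Section~\ref{sec_tgotb}: traversing the polygon, the $x$-coordinate increases exactly as many times as it decreases, so the torus-fixed corners split into equally many with $[\vec X,\vec v_i,\vec v_{i+1}]=+1$ and with $[\vec X,\vec v_i,\vec v_{i+1}]=-1$, the two families carrying opposite induced orientations. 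Inserting this balanced fixed-point data into the $G$-signature theorem makes the local contributions cancel, giving $\sigma(B)=0$. An even unimodular form of signature $0$ and positive rank is indefinite, hence by the classification of indefinite even unimodular forms it is $pE_8\oplus qH$; the signature $\pm 8p$ then forces $p=0$, so the form is $\oplus^{q}H$ with $q=\tfrac12\,\mathrm{rk}\,H_2(B;\BB{Z})$.

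Finally I would extract the number of hyperbolic blocks from the edge classes $[x_i]$ modulo the relations \eqref{rel_div}: this count gives $k=q={\tt n}+1$ as asserted, and Freedman's uniqueness for even forms then yields $B\cong\#_{{\tt n}+1}(S^2\times S^2)$. I expect the signature step to be the main obstacle: unimodularity and evenness are essentially formal, but ruling out $E_8$-summands demands a genuine proof that $\sigma(B)=0$ for every polygon in the classification, and turning the balanced-corner heuristic into a clean $G$-signature (or orientation-reversing symmetry) argument --- rather than the case-by-case diagonalization of Example~\ref{YPQ} --- is the delicate part.
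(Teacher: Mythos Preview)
Your proposal reaches the same endpoint as the paper---Freedman's theorem plus the classification of indefinite even unimodular forms---but the route is genuinely different. The paper is entirely computational: it writes out the intersection matrix explicitly from \eqref{intersection_form} in the standard basis \eqref{std_form}, reads off even parity directly from the diagonal entries (which are $0$ or $\pm 2$), computes the determinant as $(-1)^{{\tt n}-1}$ to establish unimodularity, and diagonalises over $\BB{Q}$ by elementary row and column operations to see that the signature is zero. No geometric input beyond the combinatorics of the normals is used. Your approach trades these calculations for structural arguments: Poincar\'e duality for unimodularity, the spin structure on $B$ (inherited via the $f_X=0$ reduction of Section~\ref{sec_SttSE}) for evenness, and a $G$-signature fixed-point argument for $\sigma(B)=0$. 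The first two of your substitutions are cleaner and more robust than the paper's explicit checks; the third is where the approaches really diverge.

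You are right to flag the signature step as the obstacle. The mere equality of the number of $+1$ and $-1$ corners does not by itself force cancellation in the $G$-signature formula, since each local contribution carries the equivariant weights at that fixed point, not just a sign. What makes your argument salvageable here is a stronger fact, observed later in the paper in the instanton discussion of Section~\ref{sec_Datis}: in the standard form \eqref{std_form} the fixed points pair up as $i\leftrightarrow 2{\tt n}+1-i$ with \emph{identical} equivariant parameters $(\epsilon_1,\epsilon_2)$ but opposite orientation sign. With this pairing the local contributions cancel termwise and $\sigma(B)=0$ follows. If you supply that pairing explicitly, your conceptual proof goes through; without it you are effectively back to the paper's direct diagonalisation, which is less illuminating but requires no additional input.
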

\begin{proof}
We claim that all our intersection matrices are equivalent to a direct sum of terms $H$ of \eqref{std_int_form}.
If this is so, then by a famous theorem of Freedman (theorem 1.5 \cite{freedman1982}), there is a unique simply connected 4-manifold whose intersection form realizes the given quadratic form.
This shows that the manifolds in question have to be $\#_k(S^2\times S^2)$.

Next we prove the claim.
We always assume that the normals are put in the standard form of equation \eqref{std_form}; which makes the intersection form take the general form
\bea
\left[
  \begin{array}{ccccc}
    \ddots & 1 \\
    1 & -2 & 1 \\
      &  1 & 0 & -1 \\
      &    & -1  & 2 & -1\\
      &    &   & -1 & \ddots \\
        \end{array}\right].\nn
\eea
It is easy to see that this paring has even parity, i.e. $\bra x,x\ket=\textrm{even}$ for any $x$.

Working over $\BB{Q}$ and using elementary row and column operations, one can show that the pairing is equivalent to the diagonal matrix
\bea
 \textrm{diag}[-2,-\frac{3}{2},-\frac{4}{3},\cdots,-\frac{{\tt n}-1}{{\tt n}-2},\frac{{\tt n}-2}{{\tt n}-1},\frac{{\tt n}-3}{{\tt n}-2},\cdots,\frac12,-\frac12,2],\nn\eea
from which we see that its signature (the number of positive eigenvalues minus the number of negative eigenvalues) is zero.
Moreover the determinant of the intersection form is $(-1)^{{\tt n}-1}$ (since the matrices of the elementary row/column operations have determinant 1, one can compute the determinant using the above diagonal form) and hence it is invertible.
Thus, our intersection form is of maximum rank, is even and of zero signature.
It is easy to see that the same holds for the direct sum of factors of $H$, so by a theorem classifying the indefinite even quadratic forms (theorem 5.3 in chapter 2 of \cite{Milnor73}), they are equivalent.
\end{proof}
\begin{remark}
   Note that the complex structure of the resulting 4-manifold is \emph{not} inherited from the transverse complex structure of the 5-manifold, in contrast to the ones appearing below.
\end{remark}
\begin{remark}
 The manifolds $\#_k(S^2\times S^2)$ are a sub-family of
   \bea (\pm M_{E_8})^{\#2m}\#(S^2\times S^2)^{\#k}\label{general_spin}\eea
   where $M_{E_8}$ is some 4-manifold with intersection form the Cartan matrix of $E_8$. One has that any simply connected smooth 4-manifold has the \emph{homeomorphism} type above (however the converse statement is an open problem). Indeed, the intersection form of a spin 4-fold must be indefinite, for by Donaldson's theorem, a definite intersection form can be diagonalised to $+1$ or $-1$ and so not spin (since the intersection form of spin manifolds have even parity). Then the classification of the indefinite forms gives $\pm nE_8\oplus kH$. Furthermore the number of copies of $E_8$ is even so that the intersection form has signature divisible by 16 according to Rohklin's theorem. And if $m>0$, one needs $k>0$ so as not to have a definite form, leading to \eqref{general_spin}.
\end{remark}

\subsection{More examples not included in the classification}\label{Meniitc}

If one gives up the condition $L_X \Go =f_X = 0$ or equivalently $\vec \xi \cdotp \vec X = 0$, one can find some more sporadic cases.
We do not consider these in this paper, leaving them for future study, but we make the following observation.
Consider the condition
\[
 \vec X \cdotp (\vec v_i \times \vec v_{i+1} ) = \pm 1~.
\]
The vector $\vec w_i \equiv \vec v_i \times \vec v_{i+1}$ is a generator of our cone; and thus it is also a normal vector of the dual cone.
This means that we can think of the condition $\vec X \cdotp \vec w_i = \pm 1 \ \forall i$ as a ``generalized Gorenstein condition'' for the dual cone.
If we require to have strictly $\vec X \cdotp \vec w_i = +1$, it is exactly the Gorenstein condition for the dual cone.
Cones with this property, i.e. where both the cone and its dual are Gorenstein, are called \emph{reflexive Gorenstein}. They are well studied \cite{Skarke:2012zg, CoxKatz} , since they are important and useful in the context of mirror symmetry: the cone and its dual give us a mirror pair of CY manifolds.

Reflexive Gorenstein cones are in one-to-one correspondence with reflexive polytopes (polytopes that contain exactly 1 interior lattice point). In 2D there are 16 such polytopes (up to $GL_2(\mathbb{Z})$ transformations).
In figure \ref{fig_reg_SE} we have plotted some of these polytopes.
In contrast to the previous examples, the base $B$ is now a toric K\"ahler manifold.
We also note that now $X$ is always aligned with $\reeb$ at the loci of the closed Reeb orbits.

Depending on the details of the geometry, one may be able to push the spin bundle from $M$ to a spin or $\spinc$ bundle on $B$.
\begin{figure}[h]
\begin{center}
\begin{tikzpicture}[scale=.6]
\draw [step=1,thin,gray!40] (-1.2,-1.2) grid (2.2,2.2);
\draw[-,blue] (-1,-1) -- (2,-1) -- (-1,2) -- (-1,-1);
\node at (-1.3,.6) {\scriptsize1};
\node at (.6,-1.3) {\scriptsize2};
\node at (.65,.65) {\scriptsize3};
\end{tikzpicture}
\begin{tikzpicture}[scale=.6]
\draw [step=1,thin,gray!40] (-.2,-.2) grid (2.2,2.2);
\draw[-,blue] (0,0) -- (2,0) -- (2,2) -- (0,2) -- (0,0);
\node at (-.25,1) {\scriptsize1};
\node at (1,-0.3) {\scriptsize2};
\end{tikzpicture}
\begin{tikzpicture}[scale=.6]
\draw [step=1,thin,gray!40] (-.2,-.2) grid (2.2,3.2);
\draw[-,blue] (0,0) -- (2,0) -- (2,1) -- (0,3) -- (0,0);
\draw[->] (0,1) node[left] {\scriptsize$\vec v_1$} -- (.5,1);
\draw[->] (1,0) node[below] {\scriptsize$\vec v_2$} -- (1,0.5);
\draw[->] (2,0.6) node[right] {\scriptsize$\vec v_3$} -- (1.5,0.6);
\draw[->] (1,2) node[right] {\scriptsize$\vec v_4$} -- (.7,1.7);
\end{tikzpicture}
\begin{tikzpicture}[scale=.8]
\draw [step=1,thin,gray!40] (-.2,-.2) grid (2.2,2.2);
\draw[-,blue] (0,1) -- (0,0) -- (2,0) -- (2,2) -- (1,2) -- (0,1);
\node at (-.2,0.5) {\scriptsize1};
\node at (1,-0.3) {\scriptsize2};
\end{tikzpicture}
\begin{tikzpicture}[scale=.8]
\draw [step=1,thin,gray!40] (-.2,-1.2) grid (2.2,1.2);
\draw[-,blue] (0,0) -- (0,-1) -- (1,-1) -- (2,0) -- (2,1) -- (1,1) -- (0,0);
\node at (-.2,-0.5) {\scriptsize1};
\node at (.5,-1.3) {\scriptsize2};
\end{tikzpicture}
\caption{More examples that correspond to regular toric SE manifolds.}\label{fig_reg_SE}
\end{center}
\end{figure}
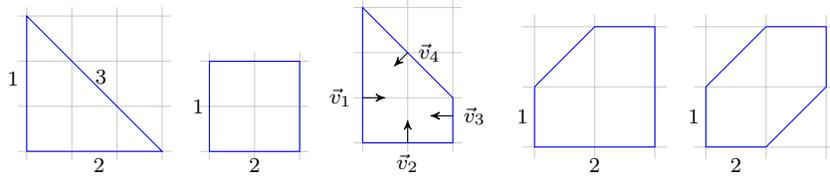

To give a bit more detail about these cases, let us work out some details of the examples in figure \ref{fig_reg_SE}.
The second and third one have normals given by
\bea [\vec v_1,\cdots,\vec v_4]=\left[
         \begin{array}{cccc}
           1 & 1 & 1 & 1 \\
           1 & 0 & -1 & k \\
           0 & 1 & 0 & -1 \\
         \end{array}\right],~~k=-1,0~ . \nn
 \eea
Denoting by $[i]$ the divisor of the $i^{th}$ face of the moment polygons in the figure above, the canonical classes (and also the K\"ahler class) are
\bea
&&3[3]~,\nn\\
&&2[3]+2[4]~,\nn\\
&&2[3]+3[4]~,\nn\\
&&2[3]+2[4]+[5]~,\nn\\
&&[3]+2[4]+2[5]+[6]~,\nn
\eea
respectively.
Here we have used relations among the divisors to eliminate $[1],\,[2]$.
Only in the second case is the canonical class divisible by 2 and one can reduce to a spin structure (the geometry is $S^2\times S^2$ after all).
For the rest, one gets $\spinc$ structures.

\section{Reduction of  $\mathcal{N}=1$ SYM to 4D}\label{sec_Red}

\subsection{Reduction of the action}\label{sec_Rota}

Starting from the works \cite{Kallen:2012cs,  Hosomichi:2012ek, Kallen:2012va} the
5D $\mathcal{N}=1$ supersymmetric Yang-Mills on a Sasaki-Einstein manifold has been constructed in \cite{Qiu:2013pta, Qiu:2014oqa}.
  For $\mathcal{N}=1$ vector multiplet the action has the following form  (we refer the reader to appendix \ref{app_spinors} where the conventions used here are spelled out)
\be \label{action_vector} \begin{split}
 S_{vec}= \frac{1}{(\gYMf)^2} & \int\limits_M\textrm{Vol}_M\,\Tr\Big[\frac{1}{2}F_{mn} F^{mn} -D_m\gs  D^m\gs-\frac12 D_{IJ}D^{IJ}+2 \gs t^{IJ}D_{IJ}- 10
 t^{IJ}t_{IJ}\gs^2 \\
&+i\gl_I\Gc^mD_m\gl^I-\gl_I[\gs,\gl^I]-i\, t^{IJ}\gl_I\gl_J\Big]~,
\end{split}
\ee
where the $\gl^I$ satisfy the symplectic Majorana condition
\bea
(\gl^I)^*=\ep_{IJ}\,C\,\gl^J~\nn~,
\eea
with $C$ being the charge conjugation matrix. The supersymmetry transformations read
\bea
\label{susytr}
&&\gd A_m = i\xi_I\Gc_m\gl^I~,\nn\\
&&\gd\gs = i\xi_I\gl^I~,\nn\\
&& \gd\gl_I = -\frac12(\Gc^{mn}\xi_I)F_{mn}+(\Gc^m\xi_I)D_m\gs-\xi^JD_{JI}+2 t_I^{~J}\xi_J\gs~, \label{susy_vect} \\
&&\gd D_{IJ} = -i\xi_I\Gc^mD_m\gl_J+[\gs,\xi_I\gl_J]+i t_I^{~K}\xi_K\gl_J+(I\leftrightarrow J)~.\nn
\eea
The spinor $\xi_I$ is also symplectic Majorana and satisfies the Killing equation
\bea
\label{KSEse}
 \nabla_m\xi_I= t_I^{~J}\Gc_m\xi_J~,~~~t_I^{~J}=\frac{i}{2 r}(\sigma_3)_I^{~J}~,~~~(\xi_I\xi_J)=-\frac12\ep_{IJ}~,\label{killing_eqn_I}
 \eea
where $\sigma_3=\rm{diag}[1,-1]$. The supercharge squares to a translation along the Reeb vector ${\reeb}^n= \xi^I \Gamma^n \xi_I$.
 In formula \eqref{killing_eqn_I}  $r$ is a dimensionful parameter corresponding to the size of the manifold. Explicitly the metric on $M$ is taken taken to be $r^2 ds_5^2$. In the limit $r \rightarrow \infty$ the theory approaches ${\cal N}=1$ in flat space.

We assume now that $M$ is a $U(1)$ bundle over a 4-manifold $S^1\to M\stackrel{\pi}{\to} B$, satisfying the conditions described in the previous sections, and rewrite the action as a SYM theory over $B$.
We set up some notation first. Let the $U(1)$ action be generated by $\partial_\alpha$.  The metric $g_M$ on $M$ is invariant along the fiber, that is $\partial_\alpha$ is a Killing vector. The metric can then be written in the following form,
\be
\label{eq:met5D}
r^2 ds_5^2 = r^2 \left ( ds_4^2 + e^{2\phi} ( d\alpha + b )^2\right )~,
\ee
where $r^2 ds_4^2$ is the metric on the 4D base $B$, $\alpha$ is the coordinate along the fiber, $b$ is the connection one-form for the fibration and $r e^{\phi}$ is the radius of the fiber. Because $\partial_\alpha$ is Killing both $\phi$ and $b$ are constant along the flow it generates.
In appendix \ref{app:Ypqexample} following \cite{Gauntlett:2004yd}, we present the metric of the $Y^{p,q}$ spaces, considered in example \ref{YPQ}, in this form.

Generically $e^\phi$ is non-constant over the base, and the five dimensional volume form $\textrm{Vol}_M$ is related to the volume form $\textrm{Vol}_4$ on the base by (note that this is not the same as $\textrm{Vol}_B$ defined in \eqref{vol_B})
\be \label{eq:Vol4can}
\textrm{Vol}_M= r e^\phi d\alpha\wedge\textrm{Vol}_4~.
\ee
Finally let $\gb=r^{-1} e^{-2\phi} g_M \partial^{\ga}=r(d\alpha+b)$.
Since $\partial_\alpha$ is Killing, $d\gb$ is constant along its flow, that is $L_{\partial_{\ga}}d\gb=0$. Additionally $\iota_{\partial_\alpha}(d\beta)=0$ and hence we can regard $d\gb$ as a 2-form on $B$.

In reducing we will take all the fields in the theory to be invariant under the (spinor) Lie-derivative along $\ga$. As explained in the previous sections they can then be regarded as fields on the base $B$. In particular we restrict the gauge bundle on $M$ to be one pulled back from $B$, so that only  gauge connections of the form $\pi^*A+ {\tilde \varphi }\gb$ are considered. Here $ {\tilde \varphi }$ is an adjoint scalar that is constant along the fiber. Note that  $ {\tilde \varphi }\gb$ is a globally defined adjoint valued 1-form and hence does not affect the topology type of the bundle.

This restriction on the fields is compatible with the supersymmetry transformations \eqref{susytr} as long as the spinor parameters $\xi_I$ are constant along the fiber. Under this condition the reduction gives rise to a supersymmetric field theory on the base $B$.

The four dimensional supersymmetry variation parameter $\xi_I$ satisfies
\be
\label{KSEred}
\begin{split}
\nabla_{\mu} \xi_I&=-{1\over 4} e^{\phi}d\beta_{\mu\nu}\gamma^{\nu}\gamma_5\xi_I + {t_I}^J \gamma_{\mu} \xi_J\,,\cr
0&= {1\over 2} \partial_{\mu}\phi \gamma^{\mu} \xi_I-{1\over 8} e^\phi d\beta_{ \mu\nu } \gamma^{\mu\nu} \gamma_5 \xi_I- {t_I}^J \xi_J\,,
\end{split}
\ee
where we use $\gamma_{\mu}$ for the gamma matrices in four dimensions and we regard $d\gb$ as a form on $B$. The first equation above matches with the generalized Killing spinor equation stemming from the rigid limit of ${\cal N}=2$ Poincar\`e supergravity \cite{Butter:2015tra}. The second equation is a constraint arising from the higher dimensional Killing spinor equation  \eqref{KSEse} along the fiber direction.

Plugging $\pi^*A+  {\tilde \varphi } \gb$ into the curvature we obtain ($\bra\sbullet,\sbullet\ket_M$ is the contraction using the 5D metric $g_M$, while $\bra\sbullet,\sbullet\ket_B$ uses $g_B$)
\bea
&&F_5=F_4+(D {\tilde \varphi })\wedge\gb+ {\tilde \varphi } d\gb~,\nn\\
&&\bra F_5,F_5\ket_M=\bra F_4+ {\tilde \varphi } d\gb,F_4+ {\tilde \varphi } d\gb\ket_B+2e^{-2\phi} \bra D {\tilde \varphi },D {\tilde \varphi }\ket_B~,\nn\eea

Making use of \eqref{KSEred} and setting $\varphi= e^{-\phi} {\tilde \varphi}$ the reduced supersymmetry transformations are given by
\bea
\label{reducedsusy}
&&\gd A_{\mu} = i\xi_I\gamma_{\mu}\gl^I~,\nn\\
&&\gd \varphi = i\xi_I\gamma_5\gl^I~,~~~~\gd\gs = i\xi_I\gl^I~,\nn\\
&& \gd\gl_I = -\frac14\left(2\cancel{F} + \varphi e^\phi \cancel{d\beta} \right)\xi_I+(\cancel{D}\sigma+ \gamma_5 \cancel{D}\varphi)\xi_I - i [ \varphi, \sigma] \gamma_5 \xi_I \\
&& \qquad\quad -D_{IJ}\xi^J+2(\sigma +\gamma_5 \varphi) t_I^{~J}\xi_J~, \nn\\
&&\gd D_{IJ} = -i\xi_I\cancel{D}\gl_J-[\varphi,\xi_I\gamma_5\gl_J]+[\gs,\xi_I\gl_J]+(I\leftrightarrow J)\,. \nn
\eea
These are a specific instance of those arising from rigid ${\cal N}=2$ supergravity \cite{Hama:2012bg,Klare:2013dka,Pestun:2014mja,Butter:2015tra}. The 5D supersymmetry transformations \eqref{susytr} provide a compact packaging of the 4D ones.
Finally the 4D action reads (we suppressed subscripts ${}_4,\,{}_B,$)
\be \label{action_vector_4D}
 \begin{split}
& S^{4D}_{vec}= \int\limits_B\textrm{Vol}_4  \frac{ 2\pi r e^\phi}{(\gYMf)^2}\, \Tr\Big[\frac{1}{2}\bra F+ {\tilde \varphi } d\gb,F+ {\tilde \varphi } d\gb\ket+ \bra D { \varphi },D { \varphi }\ket-\varphi^2 \nabla^2 \phi-\bra D\gs,D\gs\ket
+ [ { \varphi },\gs]^2\\
&\hspace{0.5cm} -\frac12 D_{IJ}D^{IJ}
 +2 \gs t^{IJ}D_{IJ}- 10
 t^{IJ}t_{IJ}\gs^2+i\gl_I\slashed{D}\gl^I + \frac{i} {8} e^{\phi} \lambda_I\, \cancel{d\beta} \gamma_5 \lambda^I
 + \frac{ i } {2} \lambda_I\, \cancel{\partial\phi} \lambda^I \\
& \hspace{0.5cm} -\gl_I[\gs - \gamma_5\varphi,\gl^I]-i t^{IJ}\gl_I\gl_J\Big],
 \end{split}
 \ee

We see that after reduction the field theory defined by \eqref{action_vector_4D} has a position dependent YM coupling constant, the dependence coming from $e^{\phi}$. This is expected since we know that, when performing a Kaluza-Klein type reduction, the YM coupling picks up a factor of the radius of the $S^1$ fiber, which in our case is not of constant size. Nevertheless this theory is supersymmetric by construction.
We define the 4D YM coupling in terms of the 5D as
\be \label{eq:4DgYM}
	\frac{1}{{\gYMt}(x)} = \frac{2\pi r e^\phi}{(\gYMf)^2} \ .
\ee
 We want to point out that the action above reverts to the flat space SYM when $r\to \infty$. To see this one needs to remember that the geometric quantities such as the metric, $\gb$ and $t$ contain $r$ explicitly, while derivatives of the conformal factor goes to zero since $\phi$ is slow varying across distances far smaller than $r$.

With the goal of reaching a more conventional theory, in section \ref{sec_Defo} we will study diverse deformations of \eqref{action_vector_4D}. Along the way we will see that these deformations are $Q$-exact and hence do not affect supersymmetric observables.

 With a view towards the discussion of instantons in section \ref{sec_Datis}, we modify the 4D action by adding a $\theta$-term
\[
	 S_{YM} \rightarrow S_{YM} - \frac{i\theta}{8\pi^2}\Tr\int F \wedge F~ ,
\]
where $\theta$ is constant.
This term is supersymmetric by itself.
It is now natural to define the position dependent complex coupling $\tau$ as
\[
	\tau(x) = \frac{4\pi  i }{\gYMt ( x ) } + \frac{\theta}{2\pi}~,
\]
which takes values in the upper half complex plane.
This is what will appear in the instanton partition function.

\smallskip

\subsection{The hyper-multiplets}
The 5D hyper-multiplet consists of an $SU(2)_R$-doublet of complex scalars $q^A_I,~~I=1,2$ and an $SU(2)_R$-singlet fermion $\psi^A$, with the reality conditions ($A=1,2,\cdots,2N$)
\bea
 (q^A_I)^*=\Omega_{AB}\epsilon^{IJ}q^B_J~,~~(\psi^A)^*= \Omega_{AB}C\psi^B~,\nn
\eea
where $\Omega_{AB}$ is the invariant tensor of $USp(2N)$ and $C$ is the charge conjugation matrix.

Suppressing the gauge group index, the on-shell supersymmetry variations are
\bea
\label{susyhyp}
&&\delta q_I=-2i\xi_I\psi~,\nn\\
&&\delta\psi=\Gc^m\xi_I(D_mq^I)+i\sigma \xi_Iq^I-3 t^{IJ}\xi_Iq_J~.
\eea
The 5D supersymmetric action reads
\bea
&&S_{hyp}=\int_M\textrm{Vol}_M\,\big(\epsilon^{IJ}\Omega_{AB}D_mq_I^A  D^m q_J^{B} -\epsilon^{IJ}q_I^A \gs_{AC} \gs^C_{~B}  q_J^B +
\frac{15}{2} \epsilon^{IJ}\Omega_{AB}t^2  q_I^A  q_J^B
 \nn\\
&&\hspace{1cm}-2i \Omega_{AB}\psi^A\slashed{D}\psi^B-2\psi^A\gs_{AB} \psi^B -4\Go_{AB}\psi^A\gl_Iq^{IB}-iq_I^AD_{AB}^{IJ}q_J^B\big)~.\nn\eea
We refer the reader to \cite{Hosomichi:2012ek} for more details on the hyper-multiplet.
We do not explicitly present the reduction to 4D of this action and the supersymmetry transformation rules \eqref{susyhyp}. These can be performed along the same lines as for the vector multiplet.
It is important to note that the hypermultiplet action is Q-exact \cite{Kallen:2012cs,Hosomichi:2012ek,Kallen:2012va}.

\smallskip

\subsection{Deformations of the action}\label{sec_Defo}

Here we will study supersymmetric deformations of the action \eqref{action_vector_4D} which give rise to a four dimensional theory with coupling constant $\gYM$ which is position independent. To accomplish this it is convenient to go back to the  five dimensional action \eqref{action_vector} and rewrite it in terms of cohomological (twisted) variables that make the action of supersymmetry more transparent.

The cohomological complex for Yang-Mills theory on a Sasaki-Einstein manifold was introduced in \cite{Kallen:2012cs} (see \cite{Baulieu:1997nj} for earlier work). Its bosonic variables comprise, besides the fields $A_\mu$ and $\sigma$, a two form $H_{\mu\nu}$ while the gauginos are embedded in a one form $\Psi_\mu$ and a two form $\chi_{\mu\nu}$. Appendix \ref{app:cohomological} includes a brief review of the definitions of these variables, their salient properties, and their transformation under supersymmetry.

In terms of twisted variables the supersymmetric action \eqref{action_vector} can be written as the sum of a $Q$-closed contribution and various $Q$-exact terms:
\be
\label{symact}
        S_{YM} = \frac{1}{(\gYMf)^2}\left [  CS_{3,2} ( A + \sigma \kappa ) + i\Tr \int \kappa\wedge d\kappa \wedge \Psi\wedge\Psi\right] + Q W_{vec}~,
\ee
where
\be
\begin{split}
\label{Qexactdef}
        &CS_{3,2}( A ) =\Tr \int \kappa \wedge  F \wedge F, \\
        &W_{vec} = \frac{1}{(\gYMf)^2}\Tr \int \left [ \Psi \wedge \star (-\iota_{\reeb } F - d_A\sigma ) - \frac 1 2 \chi \wedge \star H + 2 \chi \wedge \star F + \kappa\wedge d\kappa \wedge (\sigma \chi ) \right ] ~.
\end{split}
\ee
Here ${\reeb}$ is the Reeb vector and $\kappa$ its dual one-form $\kappa= g {\reeb}$. Note in particular that $\iota_{\reeb }\kappa=1$.

Supersymmetry requires the overall coefficient of the term in square brackets in \eqref{symact} to be constant. Reducing  to four dimensions along $X=\partial_\alpha$ these terms give rise to a non constant $\theta$ term proportional to $\iota_{X} \kappa$
\be
\label{nctheta}
\Tr\int_B {2\pi r\over (\gYMf)^2} (\iota_{X} \kappa) F\wedge F+ \ldots
\ee
Here the dots stand for several other terms, that are necessary to preserve supersymmetry and go away in the flat space limit $r\rightarrow \infty$. When considering the reduction of the complete action \eqref{symact} this non constant $\theta$ term is cancelled by a contribution coming from the $Q$-exact terms.

Because the supercharge squares to a translation along ${\reeb}$ we can multiply each of the $Q$-exact terms by arbitrary functions, constant along ${\reeb}$, preserving supersymmetry. Indeed such deformations are not just supersymmetric but Q-exact, thus they do not affect the value of supersymmetric observables.  In the previous subsection we have considered reducing along a $U(1)$ fiber with length given by $e^{\phi}$. This length is invariant along ${\reeb}$ hence we can multiply all the $Q$-exact terms by $e^{-\phi}$. Upon reduction we obtain a theory on the base $B$ with constant $g_{YM}$. We must however pay attention to the fact that we could not rescale the supersymmetrized $CS_{3,2}$ term in the action. As a consequence the four dimensional theory will include the non-constant $\theta$ term \eqref{nctheta} as it is no longer cancelled by the $Q$-exact terms.

\subsubsection{Weyl rescaling}

Performing a dimensional reduction along a $U(1)$ fiber of varying length leads to a four dimensional field theory with varying coupling constant. It is natural to consider if it is possible to write a supersymmetric theory on $M$ with a deformed metric, so that the fiber is of constant length. For instance, this can be  achieved via a Weyl rescaling of the five dimensional metric. Note that the factor $e^\phi$ in the metric \eqref{eq:met5D} which controls the lenght of the fiber is constant along the Reeb vector, hence it is plausible that supersymmetry can be preserved under such a rescaling.

In order to address this question we can make use of the general framework for constructing supersymmetric field theories in curved space by taking a rigid limit of supergravity coupled to matter \cite{Festuccia:2011ws}. For ${\cal N}=1$ field theories in five dimensions the appropriate rigid limit of supergravity has been studied in \cite{Pan:2013uoa, Imamura:2014ima, Pan:2015nba, Pini:2015xha}. In order to preserve supersymmetry with the Weyl rescaled metric there must exist a solution to a Killing spinor equation generalizing~\eqref{KSEse}
\be
\label{genKSE}
        D_m \xi_I - t_I^{\ J} \Gamma_m \xi_J - \mathcal{F}_{mn} \Gamma^{n}\xi_I - \frac 1 2 \mathcal{V}^{pq}\Gamma_{mpq}\xi_I = 0\,.
\ee
Here $D_m$ includes a background $SU(2)_R$ connection and $ \mathcal{F}_{mn},~ \mathcal{V}^{pq}$ are background supergravity fields. For instance $ \mathcal{F}_{mn}$ is the graviphoton field strength\footnote{An equation stemming from setting to zero the variation of the dilatino needs to be satisfied as well.}.

 We also require that the supercharge continues to square to translations along the Reeb vector ${\reeb}$ and that the solution of the generalized Killing equation \eqref{genKSE} is continuously connected to the original solution of \eqref{KSEse} as the rescaling factor approaches unity. Under these conditions the rigid limit of supergravity will provide a deformation of the theory given by \eqref{action_vector} that is supersymmetric on the Weyl rescaled manifold. This theory is given by ${\cal N}=1$ SYM minimally coupled to the Weyl rescaled metric together with terms that vanish in the flat space limit and are required by supersymmetry.

The analysis of the Killing spinor equation \eqref{genKSE}, showing that the Weyl rescaling can be performed preserving supersymmetry, is presented in appendix  \ref{app:weylrescaling}. The supersymmetric variations of the fields are deformed under rescaling, however the supercharge continues to square to translations along the Reeb vector ${\reeb}$. As a consequence it is possible to define appropriate twisted variables giving rise to a cohomological complex of the same form as in the Sasaki-Einstein case (see Appendix \ref{app:cohomological}). In particular the role of $\sigma$ in the complex is now played by ${\tilde \sigma}=e^{-\phi} \sigma$. We can write an action in terms of the twisted variables as before :
\be
\label{newcohact}
        S_{YM} = \frac{1}{(\gYMf)^2}\left [  -CS_{3,2} ( A + {\tilde \sigma} \kappa ) + i \int \kappa\wedge d\kappa \wedge \tilde\Psi\wedge \tilde\Psi + Q W_{vec} \right ]~.
\ee

Here $\kappa$ is the same one form as in the Sasaki-Einstein case. In particular $\iota_{\reeb} \kappa=1$.
By taking the $Q$-exact terms of the same form as in \eqref{Qexactdef} (rescaled by factors that are constant along ${\reeb}$) all the leading terms in the action stemming from rigid supergravity can be matched to \eqref{newcohact}. It follows that, if the twisted variables are held fixed under Weyl rescaling, the theory \eqref{newcohact} is a $Q$-exact deformation of the theory on the original Sasaki-Einstein manifold.  Upon reduction the action \eqref{newcohact} gives rise to ${\cal N}=2$ SYM on the Weyl rescaled base with constant coupling $g_{YM}$ because the length of the fiber is now constant. As before however, there will be a position dependent $\theta$ term stemming from the reduction of $CS_{3,2} $.

\section{Partition function}\label{sec-partition}
In this section, we use the results for the partition function of ${\cal N}=1$ theories on toric Sasaki-Einstein manifolds \cite{Qiu:2013pta,Qiu:2014oqa}, to compute the partition function for the reduced 4D theory.
This is done by discarding the contribution of non-zero KK modes along the $U(1)$ fibre.
The answer has a similar structure to the partition function for $\mathcal{N}=2$ on squashed $S^4$ \cite{Pestun:2007rz,Hama:2012bg}, in that it factorizes to a product of contributions corresponding to isolated points on the manifold.
On $S^4$, these points are the poles, while here they are the fixed points of the torus action on the 4D manifold.
Just as for $S^4$, half of them will support instantons while the other half support anti-instantons.

\subsection{Perturbative sector}

  The results stated here are given in terms of a generic Reeb vector field, for which we do not have a Sasaki-Einstein metric. Because the partition function depends only on the Reeb, and the cohomological complex (see appendix \ref{app:cohomological}) has a straightforward generalisation for generic Reeb, our result below is still valid.

From the computation of \cite{Qiu:2014oqa}, the perturbative contribution to the partition function of the 5D $\mathcal{N}=1$ vector multiplet couple to
 hypermultiplet in representation ${\underline{R}}$  reads
\bea
Z^{pert}
=\int\limits_{\FR{t}}da~e^{-\frac{8\pi^3 r^3}{(\gYMf)^2}\varrho\,\Tr[a^2]}\cdotp
\frac{{\det}_{adj}' ~  S_3^C(ia| \vec\reeb)}{\det_{\underline{R}}S_3^C(ia+im+\vec\xi\cdotp\vec\reeb/2| \vec\reeb)}~,\label{Z_pert_fin}
\eea
where $\varrho = \mathrm{Vol}(M) / \mathrm{Vol}(S^5)$, and $S_3^C$ is the generalized triple sine associated to the cone $C$ \cite{Tizzano:2014roa,Winding:2016wpw}, which is defined as
\bea
S_3^C ( x | \vec \go ) =\prod_{\vec m\in C\cap\BB{Z}^3}(\vec \go\cdotp\vec m+x)\prod_{\vec m\in C^{\circ}\cap\BB{Z}^3}(\vec \go\cdotp\vec m-x)~.\label{gen_multiple_sine}
\eea
Here $C^{\circ}$ is the interior of $C$, and $C$ is the moment map cone of $M$.
When the manifold is SE, there is a vector $\vec \xi$ such that $\vec\xi\cdotp\vec v_i=1 \ \forall i$, and so the product above can be written as
\bea
S_3^C ( x | \vec \go ) =\prod_{\vec m\in C\cap\BB{Z}^3}(\vec \go\cdotp\vec m+x)(\vec \go\cdotp\vec m+\vec\xi\cdotp\vec\go-x)~.\label{gen_multiple_sine_I}
\eea

The perturbative partition function for the reduced theory on $B$ is obtained by keeping only the zero Kaluza-Klein modes along the $S^1$ fiber.
Using the explicit description in section~\ref{sec_Acr} of the cone $C$, the normals to its faces are
\bea
[\vec v_1,\cdots,\vec v_{\tt n},\vec v_{{\tt n}-1},\cdots,\vec v_{2{\tt n}}]=\left[
\begin{array}{cccccccc}
    1& 1 & 1 & \cdots & 1 & 1 & 1 & 1\\
    0 & 1 & 2 & \cdots & {\tt n} & {\tt n}-1 & \cdots & 1\\
    0 & 0 & * & * & * & * & * & *\\
\end{array}\right].\nn
\eea
Because $\vec X=[0;0;1]$, we keep only the modes $\vec X\cdotp \vec m=m_3=0$.
Geometrically, this is the intersection of the cone $C$ with the plane with normal vector $\vec X$.
Now the constraint $\vec v_i\cdotp \vec m\geq0,\,\forall i$ reads
\bea
 m_1+pm_2\geq0~,~~~p=0,\cdots,{\tt n}~.\nn
 \eea
As a result, the region for $m_1,m_2$ is as in figure \ref{fig_region_m}.
\begin{figure}
\begin{center}
\begin{tikzpicture}[scale=1.2, fill opacity=.4, draw opacity=1, text opacity = 1]
\draw [step=0.3,thin,gray!40] (-.1,-.5) grid (1.8,1.5);

\draw [->] (-.3,0) -- (1.8,0) node [below] {\scriptsize$m_1$};
\draw [->] (0,-.4) -- (0,1.5) node [left] {\scriptsize$m_2$};
\fill[orange] (0,0) -- (0, 1.5) -- (1.8,1.5) -- (1.8, -0.5) -- cycle;
\draw [-] (0,0) -- (1.8,-0.5) node [right] {\scriptsize$m_1+{\tt n}m_2=0$};
\end{tikzpicture}
\caption{$m_1, m_2$ runs over the shaded region, which is the cone $\tilde C$.}\label{fig_region_m}
\end{center}
\end{figure}
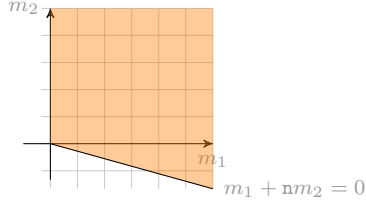
This region is a 2D cone $\tilde C$.
To describe the resulting perturbative partition function, we define the lower dimensional analogue of~$S_3^C$:
\bea
	\Gu^{\tilde C}( x | \vec \go ) = \prod_{\vec m \in \tilde C\cap\mathbb{Z}^2 } ( \vec \go \cdot \vec m + x )  \prod_{\vec m \in \tilde C^\circ \cap\mathbb{Z}^2 } ( \vec \go \cdot \vec m - x )~.
\eea
This special function is the even-dimensional analogue of the multiple sine functions that appear in odd dimensions. It is a straightforward generalization of the perturbative answer that appeared in \cite{Pestun:2007rz},
 see also \cite{Pestun:2016jze}\footnote{It is possible to write an expression for $\Gu^C$ as a factorized product over contributions from the torus fixed points.
While we do not write it explicitly here, this factorization will be apparent in the next subsection. }.

In terms of the above special function, the perturbative result can be written as
\bea
Z^{pert}
=\int\limits_{\FR{t}}da~e^{ - \frac{8\pi^3 r^3}{(\gYMf)^2 }\varrho\,\Tr[a^2]}\cdotp
\frac{{\det}_{adj}' ~  \Gu^C (ia|\reeb^1,\reeb^2)} {\det_{\underline{R}}\Gu^C ( ia+im+\vec\xi\cdotp\vec\reeb/2| \reeb^1,\reeb^2) }~.\label{Z_pert4}
\eea
Notice that it is the 5D YM coupling that appear in the classical action.
In the next section, we will explain how the combination ${\varrho \over (\gYMf)^2 }$ actually is a natural 4D quantity that involves the position dependent 4D coupling evaluated at the torus fixed points.

Next, we will investigate the asymptotic behavior of  \eqref{Z_pert4} as we go the large radius limit where the local geometry approaches flat space and we can  compare with well-known flat space results.

For computing the asymptotic behavior, we will use the approach of \cite{Qiu:2013pta}, for details we refer the reader to section 6 of that paper.
The asymptotic behavior of the above matrix model is given by
\bea \label{eq:4dpertanswer}
	Z^{pert} \sim \int\limits_{\FR{t}}da~e^{ - \frac{8\pi^3 r^3}{(\gYMf)^2 }\varrho\,\Tr[a^2]}\cdotp
	e^{ \Tr_{adj} V_v^{asy}(ia) } \cdotp e^{\Tr_R V_{h}^{asy} (ia + im) } ~,
\eea
where the functions $V_v^{asy}, V_h^{asy}$ give the asymptotic contributions of the vector and hyper-multiplet respectively.
They are given by
\bea \label{eq:hyperasymp}
	V_h^{asy} (x) = -  \rho \big [6x^2 - (4x^2 + \frac{1}{3} ( - \go_1^2 + 2 {\tt n} \go_1 \go_2 + 2 \go_2^2 ) )\log |x| \big ]~ ,
\eea
and
\bea \label{eq:vectorasymp}
	V_v^{asy} (x) = \rho \big [ 6 x^2    - ( 4 x^2  - \frac 2 3 ( \go_1^2 + {\tt n} \go_1 \go_2 - \go_2^2 ) ) \log | x | \big ]~ .
\eea
Here $\rho$ is given by
\be
\label{defrho}
	\rho = \frac{{\tt n}}{4\reeb_2 ({\tt n}\reeb_1 - \reeb_2 ) } ~.
\ee
We note that this only depends on the number of sides of our moment map cones, i.e. the number of fixed points, and is independent of the overall shape of the cone.

\subsection{Comparison with flat space results}\label{sec:flatspace}
We consider the asymptotic contributions from the vector and hyper as given above in equations \eqref{eq:hyperasymp}, \eqref{eq:vectorasymp}.
The two terms contribute to the effective action at the point $\sigma=i a$ on the Coulomb branch. We focus on the $a^2\log (r|a|)$ terms and compare them to the 1-loop $\gb$-function of $\mathcal{N}=2$ SYM in flat space.
Putting together the classical action and the quantum generated effective action, focusing only on the log term we have
\bea
&& S_{\textrm{eff}}=-\frac{8r^3}{(\gYMf)^2}\textrm{Vol}_M\,\Tr_f[ a^2]-\frac{{\tt n}r^2}{\reeb_2(\reeb_2-{\tt n}\reeb_1)}\log ( r|a|) \Tr_{adj}[a^2] \nonumber \\
&&+\frac{{\tt n}N_fr^2}{\reeb_2(\reeb_2-{\tt n}\reeb_1)}\log (r |a|)\Tr_R[a^2]~,\label{S_eff}\eea
where in the log we have  $r^{-1}$ as the renormalisation scale at which $\gYM$ is defined, and we have extracted the powers of $r$ from the volume, so $\textrm{Vol}_M$ here is just a number.

We now rewrite the above as a sum of contributions from the fixed points of $U(1)^2$ acting on $B$, which are also the loci of the closed Reeb orbits.
To this end we rewrite the volume $\textrm{Vol}_M$ as
\bea
 \textrm{Vol}_M=\pi^3\sum_i\frac{-[X,v_i,v_{i+1}]^2}{[\reeb,v_i,v_{i+1}][\reeb,v_i,X][\reeb,v_{i+1},X]}~.\label{volume_alt}\eea
This formula is derived using localisation techniques on K-contact manifolds in \cite{contact_loc}.
The sum is over the corners of $\Gd_{\mu}$ (see section \ref{sec_Acr} for notations).
At each of these corners resides a closed Reeb orbit, and each contributes ($v_i=u,\,v_{i+1}=v$)
\bea {\pi^2\over 2} \ell_O\cdotp\frac{1}{\ep_1\ep_2}\cdotp (\iota_X\gk)^2=  {\pi^2\over 2} \frac{2\pi}{[u,v,\reeb]}\cdotp\frac{-[\reeb,u,v]^2}{[\reeb,u,X][\reeb,v,X]}\cdotp\big(\frac{[X,u,v]}{[\reeb,u,v]}\big)^2\,,\nn\eea
where $\ell_O$ is the length of the closed Reeb orbit $O$ as measured by the contact form $\gk$; $\ep_{1,2}$ are the weights of $X$ acting on the space transverse to $O$ \footnote{ To match with \cite{contact_loc} it is useful to note that $(\iota_X\gk)^2$ is the 0-form component of the equivariantly completed form $(d\gk)^2$.}.
It is an interesting exercise to show that the sum in \eqref{volume_alt} actually is independent of $X$; as of course the volume of $M$ should be.
Note also that for certain choices of $\reeb,\,X$, summands of \eqref{volume_alt} may be ill-defined, yet the total sum still makes sense.

On the other hand (note $[X,v_i,v_{i+1}]=\pm1$)
\bea &&\sum_i\frac{-[\reeb,v_i,v_{i+1}]^2}{[\reeb,v_i,X][\reeb,v_{i+1},X]}\cdotp\big(\frac{[X,v_i,v_{i+1}]}{[\reeb,v_i,v_{i+1}]}\big)^2
=\sum_i\frac{-1}{[\reeb,v_i,X][\reeb,v_{i+1},X]}\nn\\
&=&\frac{-1}{(-\reeb_2)(\reeb_1-\reeb_2)}+\frac{-1}{(\reeb_1-\reeb_2)(2\reeb_1-\reeb_2)}+\cdots
+\frac{-1}{(({\tt n}-1)\reeb_1-\reeb_2)({\tt n}\reeb_1-\reeb_2)}\nn\\
&&+\frac{-1}{({\tt n}\reeb_1-\reeb_2)(({\tt n}-1)\reeb_1-\reeb_2)}+\cdots+\frac{-1}{(\reeb_1-\reeb_2)(-\reeb_2)}\nn\\
&=&\frac{2{\tt n}}{\reeb_2({\tt n}\reeb_1-\reeb_2)} = 8 \rho~, \nn\eea
where we recognize  $\rho$ defined in  \eqref{defrho}.
Using these two results, we can write both  $\rho$ and $\mathrm{Vol}_M$ as a sum over the corners of $\Delta_\mu$.
Starting from \eqref{S_eff} the $i^{th}$ corner contribution is,
\bea
S_{\rm eff}\big|_{i^{th}\,\rm{corner}}\stackrel{\log}{\sim}\Big(-\frac{4\pi^2 }{(\gYMf)^2}\frac{2\pi r }{[\reeb,v_i,v_{i+1}]}-\frac{1}{2}\log (r |a| )\frac{c_{adj}}{c_f}
+\frac{N_f}{2}\log (r |a|)\frac{c_R}{c_f}\Big)\frac{r^2}{[\reeb,v_i,X][v_{i+1},\reeb,X]}\,\Tr_f[a^2],\nn\eea
where $c_R$ is the Casimir in the representation $R$, i.e. $\Tr_R[t^at^b]=c_R\gd^{ab}$.
To interpret this formula, we note that the local geometry close to a corner is that of $S^1\times\BB{C}^2$ where the radius of $S^1$ is
\bea \frac{r}{[\reeb,v_i,v_{i+1}]} = r e^{\phi(x_i)}~.\nn\eea
So we can recognize the 4D position dependent coupling constant,
\be
\frac{2\pi r e^{\phi(x_i)} } { (\gYMf)^2 } = \frac{1}{\gYMt ( x_i )}~,
\ee
as defined in \eqref{eq:4DgYM}.
Furthermore,
$[\reeb,v_i,X]$, $[v_{i+1},\reeb,X]$ are proportional to the weights of $X$ acting on the space transverse to $S^1$, i.e. $\BB{C}^2$. It follows that
\bea \frac{1}{[\reeb,v_i,X][v_{i+1},\reeb,X]}=\textrm{Vol}_{\BB{C}_{\ep_1,\ep_2}}~,\nn\eea
i.e. the volume of $\BB{C}^2$ computed equivariantly. Putting these together
\bea && S_{\rm eff}\big|_{i^{th}\,\rm{corner}}\stackrel{\log}{\sim}\Big(-\frac{4\pi^2}{ {g}_{YM}^2(x_i)}-\frac{1}{2}\log (r |a| )\frac{c_{adj}}{c_f}
+\frac{N_f}{2}\log (r |a| )\frac{c_R}{c_f}\Big)\textrm{Vol}_{\BB{C}_{\ep_1,\ep_2}}\Tr_f[r^2 a^2]~. \nn
\eea
The quantity in the brace gives the well-known 1-loop running coupling for $\mathcal{N}=2$ theories.

The formula \eqref{volume_alt} for $\textrm{Vol}_M$ can be used to write the classical action as
\[
	S_{cl} (ia) = \sum_{i} \frac{4\pi^2 r^2} { \gYMt (x_i) } \frac{1}{\epsilon_1^i \epsilon_2^i } \Tr [ a^2 ] ~ .
\]
This matches known results on the squashed $S^4$ \cite{Pestun:2007rz,Hama:2012bg}.

\subsection{Instanton sector}\label{sec_Datis}

For the instanton sector we proceed with the same strategy as for the perturbative sector. We restrict the 5D results to the zero KK mode along $X$.
The instanton sector for the 5D theory is computed by gluing together flat space results
\bea \label{eq:5dinst}
Z_{\mathrm{inst}}^{\BB{C}^2\times S^1}(a|\gb,\ep_1,\ep_2)~,
\eea
one copy for each closed Reeb orbit. Here $\beta$ is the radius of the Reeb orbit, and $\ep_1,\ep_2$ are the equivariant rotation parameters, which are determined by the local geometry \cite{Qiu:2014oqa}. The role of instanton counting parameter is played by $q = \exp [-16\pi^3 \frac{\beta}{(\gYMf)^2} ]$.
The argument leading to this result is that the point like instantons propagating along closed Reeb orbits are the only solution invariant under the torus action. A rigorous proof is not available at the moment, though in 4D and $\BB{P}^2$ some tests have confirmed this expectation \cite{Bershtein:2015xfa}.

In 5D, the instanton equation reads
\bea \label{eq:5dinstanton}
F_5=-*_5\gk \wedge F_5~,
\eea
which is called the contact instanton equation.
After reducing to 4D, it turns into some PDE's, whose expression at a general point is not illuminating. They can however be analyzed at the fixed-points of the torus action, which correspond to the corners of the polygon~$\Gd_{\mu}$\footnote{We emphasize that some of the 4-manifolds $B$ are not toric K\"ahler, we are merely using the moment polygon of $M$ to visualize the geometry of $B$.}.

Take the hexagon example \eqref{ex_hex}.
In figure \ref{fig_hexagon_ore} we have marked with $\pm$ whether $X$ aligns or anti-aligns with $\reeb$, (e.g. at the corner 61, since $[\vec v_6,\vec v_1,\vec X]=-1$ we get anti-alignment.)
\begin{figure}[h]
\begin{center}
\begin{tikzpicture}[scale=0.5]
\draw[-,blue] (0,0) node[left] {\scriptsize{$-$}} -- (1.7,-1) node[below] {\scriptsize{$+$}} -- (3.4,0) node[right] {\scriptsize{$+$}} -- (3.4,2) node[right] {\scriptsize{$+$}} -- (1.7,3) node[above] {\scriptsize{$-$}} -- (0,2) node[left] {\scriptsize{$-$}} -- (0,0);

\node at (.7,-.7) {\scriptsize1}; \node at (2.7,-.7) {\scriptsize2}; \node at (3.6,1) {\scriptsize3};
\node at (2.7,2.7) {\scriptsize4}; \node at (0.7,2.7) {\scriptsize5}; \node at (-.2,1) {\scriptsize6};
\end{tikzpicture}\caption{The momentum polygon $\Gd_{\mu}$ whose normals are \eqref{ex_hex}. }\label{fig_hexagon_ore}
\end{center}
\end{figure}
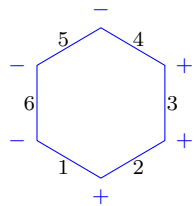
At any toric fixed point where two of the torus actions degenerate, we can decompose the vector $X$ as its part along $\reeb$ and the rest:
\[
X = (\iota_X \kappa) \reeb + X^{\perp},
\]
where $X^{\perp}$ is a locally degenerate vector with zero norm at the fixed point. Using that the Reeb is normalized,
we have that at the fixed point,
\[
\langle X,X\rangle  = (\iota_X \kappa)^2 \langle \reeb,\reeb\rangle = (\iota_X \kappa)^2~,
\]
on the other hand from the metric \eqref{eq:met5D} we have $\langle X,X\rangle = e^{2\phi}$ everywhere (here we set $r=1$). Dropping the zero norm part we conclude that at the fixed point
\be
	X = (\iota_X \kappa) \reeb  = \pm e^{\phi} \reeb~.
\ee

With this observation, we can reduce the 5D instanton equation at a fixed point as follows.
The horizontal part of the 5D instanton equation \eqref{eq:5dinstanton} reads
\[
	F_4 + \tilde \varphi d\beta = - \star_5 \kappa \wedge ( F_4 + \tilde \varphi d\beta) = - \iota_{\reeb} \star_5 (F_4 + \tilde \varphi d\beta)
	=  \pm e^{-\phi} \iota_{  X } \star_5 (F_4 + \tilde \varphi d\beta ) = \pm \star_4 ( F_4 + \tilde \varphi d\beta)~ ,
\]
where $\star_4$ in the final step uses the metric of the 4D base. The factor $e^{-\phi}$ precisely makes up the difference between the 5D and 4D volume form.
The 5D contact instanton equations at the fixed-points thus reduce to the deformed 4D instanton equations
\be \label{eq:4Dinst}
	F = \pm \star F - \tilde \varphi ( d\beta \mp \star d\beta)~ ,
\ee

which we also can write as
\[
	F^{\pm} = -\tilde \varphi d\beta^{\pm} ~.
\]

At each fixed point, keeping only the zero KK mode of the 5D answer \eqref{eq:5dinst}, we get
\be
\prod_i Z_{\mathrm{inst}}^{\mathbb{C}^2} ( a, q_i | \epsilon_1^i,\epsilon_2^i )\ ,
\ee
where $q_i = \exp [ -\frac{8\pi^2}{\gYMt (x_i) } ]$.
This is valid before turning on a $\theta$-term. With $\theta\neq 0$ in 4D, we need to distinguish which of these contributions will arise from point-like instantons versus anti-instantons.
This can be seen from the local behavior of the reduced contact instanton equation \eqref{eq:4Dinst}, namely we will have an instanton wherever $X$ and $\reeb$ align, and an anti-instanton when they anti-align.

Thus the counting parameter for an instanton at fixed point $i$ will be
\[
	e^{  2\pi i \tau (x_i)  } = q_i \ .
\]
Similarly the counting parameter for an anti-instanton will be $\bar q_j=\exp ( - 2\pi i \bar \tau(x_j)  )$.

Using the form of the normals given in \eqref{eq:normalscanonical}, the total instanton partition function will be a product of $2\mathtt{n}$ factors, each of which depends holomorphicaly on either $q$ or $\bar q$,
\[
	Z_{\mathrm{inst} }^{B} ( a | \vec \reeb )  = \prod_{i=1}^{\mathtt{n} } Z_{\mathrm{inst}}^{\mathbb{C}^2 } ( a , q_i | \epsilon_1^i, \epsilon_2^i ) \times \prod_{i=\mathtt{n}+1}^{2 \mathtt{n} } Z_{\mathrm{inst}}^{\mathbb{C}^2 } ( a , \bar q_i | \epsilon_1^i, \epsilon_2^i )~.
\]
The fact that the first $\mathtt{n}$ fixed points support instantons and the rest anti-instantons follow from the form of the normals. Hence for the geometries we are considering, by construction we will always have an equal number of instanton and anti-instanton contributions.

We can further  write down the equivariant parameters at fixed-point $i$. For $i=1, \ldots,\mathtt{n}$ these are
\[
	\epsilon^i_1 = (i-1)\reeb_1 - \reeb_2 ~, \ \ \ \ \epsilon^i_2 = i \reeb_1 - \reeb_2 ~,
\]
and for $i=\mathtt{n}+1,\ldots, 2\mathtt{n}$ they are
\[
	\epsilon^i_1 = ( 2\mathtt{n} + 1 - i)\reeb_1 - \reeb_2~, \ \ \ \ \ \epsilon^i_2 = (2\mathtt{n}-i)\reeb_1 - \reeb_2~.
\]
Thus for a given $i\leq\mathtt{n}$, fixed points $i$ and $2\mathtt{n}+1-i$ have the same equivariant parameters, and they will support instantons and  anti-instantons respectively.
It is thus tempting to try and combine the two corresponding instanton partition functions into something of the form $ | Z_{\mathrm{inst}} ( q_i ) |^2 $, however this cannot be done in general because of the position dependent $\tau$.
In other words, generically we do not have $\tau(x_i) = \tau(x_{2\mathtt{n}+1-i})$, since $\tau(x_i)$ unlike $\epsilon^i_{1,2}$  depends on the shape of the polygon.
However, for all  $\mathtt{n}$, there exists a sub-class of polytopes, that allow a choice of Reeb  for which $\tau(x_i) = \tau(x_{2\mathtt{n}+1-i})$.
In particular, inside this class are polytopes that have $\mathbb{Z}_2$ symmetry about their diagonal axis, see figure \ref{fig:symmetricpolytope}.
In these cases, after appropriately selecting $\reeb$, the instanton partition function takes the form
\[
	Z_{\mathrm{inst} }^{B} ( a | \vec \reeb )  = \prod_{i=1}^{\mathtt{n} } | Z_{\mathrm{inst}}^{\mathbb{C}^2 } ( a , q_i | \epsilon_1^i, \epsilon_2^i ) |^2~ ,
\]
which very closely mimics the answer on $S^4$ found in \cite{Pestun:2007rz}.

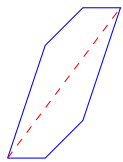
\begin{figure}[h]
\begin{center}
\begin{tikzpicture}[scale=0.5]
\draw[-,blue] (0,0) -- ( 1, 0 ) -- ( 2,1) -- ( 3, 4 )  -- ( 2 ,4  ) -- ( 1, 3) --  ( 0 , 0 ) ;
\draw[dashed,red] ( 0,0) -- ( 3,4);
\end{tikzpicture}
\caption{An example of the polygons that exhibit a reflection symmetry about the dashed red line. For the corresponding manifold the local geometry at the fixed points corresponding to parallel sides will be the same, and as a consequence the instanton partition function can be written as something explicitly real.  }\label{fig:symmetricpolytope}
\end{center}
\end{figure}

For concreteness we will work out one example in some detail. Consider again the $S^2\times S^2$ coming from the reduction of a $Y^{p,q}$ space.

\begin{example} [$S^2 \times S^2$, from reduction of $Y^{p,q}$] \label{example:S2xS2inst}
 For these spaces with topology $S^2~\times~S^2$ (we emphasize that their metric is not canonical), we have 4 fixed-points, corresponding to the ways of choosing one pole from each of the $S^2$'s.
Using the normals given in \eqref{eq:Ypqnormals}, and the prescriptions explained in section \ref{sec:flatspace} for computing the local data, we find the following parameters
 \begin{center}
  \begin{tabular}{ | c | c | c | c | c | }
    \hline
     & 1 & 2 & 3 & 4 \\ \hline
     $[v_i,v_{i+1},X]$ & +1 & +1 & -1 & -1 \\ \hline
    $\beta^{-1} = e^{-\phi(x_i)} $ & $\reeb_3$ & $\reeb_3 + (p-q)(\reeb_1-\reeb_2) $ & $ -\reeb_3 + p\reeb_1 + q (\reeb_1-\reeb_2) $ & $- \reeb_3 + p \reeb_2$ \\ \hline
    $\epsilon_1$ & $-\reeb_2$ & $\reeb_1-\reeb_2$ & $2\reeb_1-\reeb_2$  & $ \reeb_1-\reeb_2$  \\ \hline
  $\epsilon_2$ & $\reeb_1-\reeb_2$ & $2\reeb_1-\reeb_2$ & $\reeb_1-\reeb_2$ & $-\reeb_2$ \\ \hline
    \end{tabular}
\end{center}
The first line of the table tells us if $X$ and $\reeb$ align or anti-align at the fixed-point, and the second line tells us the inverse radius of the Reeb orbit.
The local complex coupling $\tau(x_i)$ at fixed-point $i$ given by
\[
	\tau (x_i) = \frac{4\pi i \beta_i}{g_{YM}^2} + \frac{\theta}{2\pi}~ .
\]
We see that the expression for $\beta_i$ depends on $\reeb_3$ and on the shape of the polygon, i.e. on the integers $p,q$, whereas the equivariant parameters $\epsilon_{1,2}$ do not.
Thus the instanton partition function for these spaces takes the form
\[\begin{split}
	&Z^{\mathbb{C}^2}_{\mathrm{inst} } ( a |  q_1 , -\reeb_2, \reeb_1 - \reeb_2 )
	Z^{\mathbb{C}^2}_{\mathrm{inst} } ( a | q_2 , \reeb_1 - \reeb_2, 2\reeb_1 - \reeb_2 ) \\&	\times
	Z^{\mathbb{C}^2}_{\mathrm{inst} } ( a |  \bar q_3 , 2\reeb_1 - \reeb_2 , \reeb_1 - \reeb_2  )
	Z^{\mathbb{C}^2}_{\mathrm{inst} } ( a |  \bar q_4 , \reeb_1 - \reeb_2, -\reeb_2 )  ~.
\end{split}
\]

\end{example}

\section{Summary}

In this paper we have constructed $\mathcal{N}=2$ 4D gauge theories on a wide class of toric manifolds with the topology of
 connected sums of $S^2 \times S^2$. The construction comes from the reduction of toric Sasaki-Einstein manifolds along a free $U(1)$
 chosen in such a way that it preserves 5D Killing spinors. We can reduce to the 4D geometrical data from the 5D toric Sasaki-Einstein
   geometry. However at the moment we are missing a description of the 4D geometry in intrinsic 4D terms. It would be important to
     further study this 4D geometry and see if our 4D toric manifolds are part of a bigger class of toric manifolds which allow $\mathcal{N}=2$ theories.
     Another important issue is that the resulting 4D theories have a point dependent coupling constant.  We think that this feature of the 4D theory should be
      taken seriously, and one should study the supergravity origin of this when placing the supersymmetric theory on curved manifolds.

  We also calculated the exact partition function for these 4D theories, by reducing the 5D answer.
  The toric manifolds we consider have an even number of fixed points, half of which
   corresponds to instanton contribution to the partition function and half to anti-instanton contributions. Although $S^4$ is formally outside of
    our analysis, the formal expression for the partition function coincides with Pestun's result (as well as for the squashed $S^4$).  We conjecture that the $\mathcal{N}=2$ partition function on any toric simply connected 4D manifold will have the same structure that we have found here.
     It would be curious to see if our results has any AGT-like explanation coming from the reduction of 6D theories, especially taking into account
      the fact that the coupling $\tau$ is point dependent.
      
      In this paper we concentrate on the reduction of 5D supersymmetric gauge theory to 4D supersymmetric gauge theory and our framework requires that the underlying four manifold is spin. The present analysis can be extended to a wider class of theories and manifolds for instance by formulating it in cohomological terms. Namely we can start from 5D cohomological theory defined in   \cite{Kallen:2012cs}  and reduce down to 4D cohomological theory. In this case we should expect a similar effect when in 4D we glue together the contributions of instantons for some of the fixed points and contributions of anti-instantons for the other fixed points. We plan to explore these more general 4D theories elsewhere.

\bigskip
{\bf Acknowledgements:} We thank first of all Dario Martelli who shared with us his insight that inspired this project. 
We also thank Marco Chiodaroli, Marco Golla, Thomas Kragh, Yiwen Pan and Luigi Tizzano for helpful discussions, and finally Mark Hamilton, who pointed out an imprecision in our statement about 4-manioflds. The work of MZ  is supported in part by Vetenskapsr\r{a}det
 under grant \#2014-5517, by the STINT grant, and by the grant  ``Geometry and Physics"  from the Knut and Alice Wallenberg foundation.
The work of GF is supported by the ERC STG grant 639220.

\appendix

\section{Details of the reduction conditions} \label{app_proofs}
Here, we give the proofs of statements in section \ref{sec_pdab}.
Remember that $S^1\to M \to B$ is our nontrivial circle bundle and that $E \to M$ is the bundle we wish to push down.
\begin{proposition}\label{prop_criterion_II_app}
  We use $\ga$ as the coordinate of the circle fibre and we let $A$ be a connection of $E$, then if $P\exp i\int_0^{2\pi} d\ga A_{\ga}=id$, the bundle $E$ can be pushed down.
\end{proposition}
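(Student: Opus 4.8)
The plan is to show that the holonomy hypothesis allows one to choose, over a suitable cover of $B$, local frames of $E$ whose transition functions are independent of the fibre coordinate; these frames then descend to define the pushed-down bundle on $B$, and exhibit $E$ as its pullback.

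First I would fix an open cover $\{U_i\}$ of $B$ that simultaneously trivialises the circle bundle, so that $\pi^{-1}(U_i)\cong([0,2\pi]\times U_i)/{\sim}$ with fibre coordinate $\alpha\in[0,2\pi]$ and $(0,x)\sim(2\pi,x)$, and is fine enough that $E$ is trivial over each slice $\{0\}\times U_i$ (this is possible after a refinement, since it only constrains $E$ restricted to a piece of the base). Over such a patch I would pick a frame $e_i(x)$ of $E|_{\{0\}\times U_i}$ and extend it to a frame $\tilde e_i(\alpha,x)$ over $[0,2\pi]\times U_i$ by solving the parallel transport equation $D_\alpha\tilde e_i=0$ with initial condition $\tilde e_i(0,x)=e_i(x)$. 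This is a linear ODE in $\alpha$ depending smoothly on the parameters $x$, so the solution exists and is smooth. The decisive point is that $\tilde e_i(2\pi,x)$ is the image of $\tilde e_i(0,x)$ under the fibre holonomy $P\exp i\int_0^{2\pi}d\alpha\,A_\alpha$, which by hypothesis is the identity; hence $\tilde e_i$ is periodic in $\alpha$ and descends to a genuine frame of $E$ over $\pi^{-1}(U_i)$, and in this frame $A_\alpha=0$.

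Next I would compare frames on overlaps. On $\pi^{-1}(U_i\cap U_j)$ write $\tilde e_i=\tilde e_j\,g_{ij}$ for the transition function $g_{ij}$. Applying $D_\alpha$ and using $D_\alpha\tilde e_i=D_\alpha\tilde e_j=0$ gives $\tilde e_j\,\partial_\alpha g_{ij}=0$, so $g_{ij}=g_{ij}(x)$ does not depend on $\alpha$. The $g_{ij}$ automatically satisfy the cocycle condition on triple overlaps, being transition functions of $E$ in $\alpha$-independent frames, so $\{U_i,g_{ij}(x)\}$ glues to a vector bundle $\bar E\to B$. Finally $\pi^*\bar E\cong E$, since the pullback bundle is glued by $g_{ij}\circ\pi=g_{ij}$, which are precisely the transition functions of $E$ in the frames $\tilde e_i$. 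This exhibits $E$ as pulled back from $B$, i.e. pushed down. As an immediate byproduct, a section $s$ with $D_\alpha s=0$ has $\alpha$-independent components in every $\tilde e_i$ and therefore descends to a section of $\bar E$.

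The only genuinely delicate step is the construction of $\tilde e_i$: locally in $\alpha$ one can always gauge $A_\alpha$ to zero by parallel transport, but the resulting frame closes up around the whole fibre exactly when the fibre holonomy is trivial. That is precisely the hypothesis, and it is what rules out the obstruction encountered otherwise (as in the $S^{4k+1}\to\mathbb{P}^{2k}$ example in the main text). Everything else — smoothness of parallel transport in parameters, the cocycle identity, and the standard reconstruction of a bundle from prescribed transition functions — is routine.
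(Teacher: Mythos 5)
Your proof is correct and follows essentially the same strategy as the paper's: both gauge away $A_\alpha$ by parallel transport along the fibre and use the trivial-holonomy hypothesis as exactly the condition that the resulting trivialisation closes up around the circle, yielding $\alpha$-independent transition functions. The only cosmetic difference is that the paper works with a finer cover of $M$ by interval slices of the fibre and composes the resulting transition functions, whereas you parallel-transport a frame from the zero-slice around the whole fibre at once; the two constructions are equivalent.
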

\begin{proof}
 The bundle $M$ can be trivialised as $S^1\times U_i$, where $\{U_i\}$ is a cover of $B$. Then choose a cover of $M$ of the form $V_{is}=(a_{is},b_{is})\times U_i$, where $(a_{is},b_{is}),~s=1,\cdots n_i$ covers an interval of the circle fibre. Assume that the cover is chosen fine enough so that $E$ is trivialised over $\{V_{is}\}$, and let $g_{is,jt}$ be the transition function of $E$. We first show that the transition function can be made independent of $\ga$.

  On a patch $V_{is}$, we denote the connection as $A^{is}$ and so on the intersection $V_{is}\cap V_{jt}$ the connections are related as
  \bea
   A^{is}=g_{jt,is}^{-1}dg_{jt,is}+g_{jt,is}^{-1}A^{jt}g_{jt,is}~.\nn
  \eea
  We first adjust locally the trivialisation by multiplying with a Wilson line
  \bea
   h_{is}(\ga,x)=P\exp\int_{a_{is}}^{\ga}d\ga\, (-A^{is}_{\ga})~,\nn
   \eea
  where $(\ga,x)$ are the fibre and base coordinates. The Wilson lines $h_{is}$ satisfies
  \bea \partial_{\ga}h_{is}(\ga,x)=-A_{\ga}^{is}h_{is}(\ga,x)~.\nn\eea
  Then the new transition function becomes
  $\tilde g_{is,jt}=h_{is}^{-1}g_{is,jt}h_{jt}$. Let us look at
  \bea
  \partial_{\ga}\tilde g_{is,jt}&=&\partial_{\ga}(h_{is}^{-1}g_{is,jt}h_{jt})=h^{-1}_{is}A^{is}_{\ga}g_{is,jt}+h_{is}^{-1}(\partial_{\ga}g_{is,jt})h_{jt}-h_{is}^{-1}g_{is,jt}A^{jt}_{\ga}h_{jt}\nn\\
  &=&h_{is}^{-1}g_{is,jt}\Big(g_{is,jt}^{-1}A^{is}_{\ga}g_{is,jt}-A^{jt}_{\ga}+g_{is,jt}^{-1}\partial_{\ga}g_{is,jt}\Big)h_{jt}=0~.\nn\eea
  So the new transition function is independent of the $\ga$ direction, and the new connection satisfies $\tilde A_{\ga}^{is}=0$ by construction.

  Now we assume that such adjustment has been made and all transition functions are $\ga$ independent and $A_{\ga}=0$.
  Now for each fixed $i$, we readjust the trivialisation on $V_{i2},\,V_{i3},\cdots V_{in_i}$ by multiplying by
  \bea g_{i1,i2},~g_{i2,i3}g_{i1,i2},~g_{i3,i4}g_{i2,i3}g_{i1,i2},\cdots\nn\eea
  This way we can make the transition functions $g_{is,i(s+1)}$ identity except possibly when one goes a full circle, i.e. $g_{in_i,i1}$ (nonetheless it is still $\ga$-independent). But the qantity $g_{in_i,i1}$ can be computed as the holonomy along the circle fibre of the original connection $A$. If this last holonomy is trivial then $g_{in_i,i1}=1$, which means that $V_{i1},V_{i2},\cdots, V_{in_i}$ for fixed $i$ can be pieced together and become one single open set that cover s the whole circle, i.e. $V_i=[0,2\pi]\times U_i$. The transition is by construction $\ga$-independent, and so one can push down the bundle $E$ to $B$
\end{proof}

\begin{proposition}
  Suppose that the criterion in proposition \ref{prop_criterion_II_app} is satisfied, then the sections of $E$ satisfying $D_{\ga}s=0$ can be pushed down.
\end{proposition}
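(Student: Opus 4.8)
The plan is to recycle the trivialisation that was built in the proof of Proposition~\ref{prop_criterion_II_app}. Recall that there one first conjugates the given trivialisation of $E$ by the Wilson lines $h_{is}$ along the fibre, producing transition functions $\tilde g_{is,jt}$ that are independent of $\ga$ and local connection forms satisfying $\tilde A^{is}_{\ga}=0$; one then multiplies the trivialisations over $V_{i2},V_{i3},\dots$ by the products $g_{i1,i2},\,g_{i2,i3}g_{i1,i2},\dots$, which are $\ga$-independent and therefore preserve $\tilde A^{is}_{\ga}=0$, and --- precisely because the fibre holonomy $P\exp i\int_0^{2\pi}d\ga A_{\ga}$ is the identity --- this glues the $V_{is}$ for fixed $i$ into a single patch $V_i=[0,2\pi]\times U_i$ with $\ga$-independent transition functions $\tilde g_{ij}$. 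These $\{V_i,\tilde g_{ij}\}$ are exactly the data defining the push-down $E_B\to B$, and I would work throughout in this adjusted trivialisation.

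First I would observe that the equation $D_{\ga}s=0$ is gauge covariant: under the changes of frame above a section $s$ is represented successively by $h_{is}^{-1}s$ and then by a further $\ga$-independent constant multiple, while the operator $D_{\ga}$ is unchanged, so $D_{\ga}s=0$ continues to hold verbatim in the adjusted trivialisation. There, however, $D_{\ga}=\partial_{\ga}+\tilde A^{i}_{\ga}=\partial_{\ga}$, so the condition becomes $\partial_{\ga}\tilde s^{(i)}(\ga,x)=0$; hence each local representative $\tilde s^{(i)}$ is a function of $x\in U_i$ only. On overlaps one has $\tilde s^{(i)}=\tilde g_{ij}\tilde s^{(j)}$ with both sides independent of $\ga$, and since $\tilde g_{ij}$ are the transition functions of $E_B$, the family $\{\tilde s^{(i)}(x)\}_i$ defines a global section of $E_B$ over $B$: this is the push-down of $s$. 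Because the patches $V_i$ already cover the whole fibre and the identification at $\ga=0\sim 2\pi$ is by the identity, there is no extra consistency condition to verify.

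Conversely, pulling such a section of $E_B$ back along $\pi$ and re-expressing it in the original trivialisation produces a section of $E$ annihilated by $D_{\ga}$, so the construction is in fact a bijection between $\ga$-covariantly constant sections of $E$ and sections of $E_B$ (depending, like the push-down of $E$ itself, on the choice of connection $A$). The only point needing any care is the bookkeeping of the two successive changes of frame inherited from the previous proof --- checking that neither destroys $\tilde A^{i}_{\ga}=0$ nor the relation $D_{\ga}s=0$ --- and this is immediate, since the first change is by the Wilson line solving $\partial_{\ga}h_{is}=-A^{is}_{\ga}h_{is}$ and the second is by matrices constant in $\ga$; there is no substantive obstacle beyond this.
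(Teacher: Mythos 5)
Your proposal is correct and follows essentially the same route as the paper: pass to the adjusted trivialisation of Proposition~\ref{prop_criterion_II_app} in which the transition functions are $\ga$-independent and $\tilde A_{\ga}=0$, so that the gauge-covariant condition $D_{\ga}s=0$ becomes $\partial_{\ga}\tilde s=0$ and the local representatives glue to a section of the push-down bundle. The extra bookkeeping you spell out (covariance of $D_{\ga}s=0$ under the two frame changes, the converse direction) is consistent with, and slightly more detailed than, the paper's argument.
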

\begin{proof}
  Repeat the adjustment as in proposition \ref{prop_criterion_II_app} to make the transition functions independent of $\ga$, and we continue to use the notation there. Let $s$ be a section, if on each patch $V_i$ one has $\partial_{\ga}s=0$, then clearly $s$ can regarded as a section of the pushdown bundle on $B$.

  Now undo the adjustments of trivialisation then $\partial_{\ga}s=0$ reverts to $D_{\ga}s=0$.
\end{proof}

\section{Spinor conventions and bilinears}\label{app_spinors}

We follow the convention for spinors of \cite{Hosomichi:2012ek}. Let $\{e^{\tt a}\}$ be a set of vielbein which reduces the structure group of $M^5$ to $SO(5)$. The gamma matrices satisfy the Clifford algebra
\bea
\{\Gc^{\tt a},\Gc^{\tt b}\}=2\delta^{\tt ab}~,\nn
\eea
and the charge conjugation relation
\bea
C^{-1}(\Gc^{\tt a})^TC=\Gc^{\tt a}~,~~~~C^T=-C,~C^*=C~.\nn
\eea
Denote by $\Gc_m=\Gc^{\tt a}e^{\tt a}_m$, which satisfy $\{\Gc_m,\Gc_n\}=2g_{mn}$.

The spinor bi-linears are formed using $C$,
\bea
\psi^TC\chi\stackrel{\textrm{abbreviate}}{\longrightarrow}\psi\chi~,\label{spinor_bi_linear}
\eea
throughout the paper, the bi-linears are abbreviated as $(\psi\chi)$, following \cite{Hosomichi:2012ek}.

Denote by
\bea \Gc^{\tt a_1\cdots a_n}=\frac{1}{n!}\Gc^{[\tt a_1}\cdots \Gc^{\tt a_n]}\nn\eea
and similarly for their curved space counterpart. We use Dirac's slash notation
\bea \slashed{M}=M\cdotp\Gc=M_{i_1\cdots i_p}\Gc^{i_1\cdots i_p},~~M\in \Go^p(M),\nn\eea
we will even drop the slash whenever confusion is unlikely.

The $SU(2)$ R-symmetry index are raised and lowered from the left
\bea
\xi^I=\epsilon^{IJ}\xi_J~,~~~~ \xi_I=\epsilon_{IJ}\xi^J~,~~~~ \epsilon^{IK}\epsilon_{KJ}=\delta^I_J~,~~~~ \epsilon^{12}=-\epsilon_{12}=1~.\nn
\eea

In 5D, one cannot impose the Majorana condition on a spinor, but we can instead impose the symplectic Majorana condition, which for a pair of spinors $\xi_I$ reads
\be
	\overline{\xi_I^{\alpha}} = C_{\alpha\beta}\epsilon^{IJ} \xi_J^\beta .
\ee

\subsection{Spinor bilinears and some of their properties} \label{app:spinorbilinears}

Given a symplectic Majorana spinor $\xi_I$ we can construct the following spinor bilinears out of it:
\be
	s = - \xi^I \xi_I > 0 , \ \ \ \ \reeb^m = \xi^I \Gamma^m \xi_I, \ \ \ \ \Theta_{mn}^{IJ} = \xi^I \Gamma_{mn} \xi^J .
\ee
These will satisfy various relations, and we use the following when solving the Killing spinor equation:
\be
	\reeb_m \reeb^m = s^2, \ \ \ \reeb^m \Theta_{mn}^{IJ} = 0 , \ \ \ s \Theta^{IJ}_{mn} = \frac 1 2 \epsilon_{mnpqr} \reeb^p \Theta^{IJqr}.
\ee
In particular when the spinor solves the Killing spinor equation on a SE manifold, we have $s=1$ and the vector $\reeb$ is the Reeb of our contact structure.
For this case, when we choose $t_I^{\ J} = \frac i 2 (\sigma_3)_I^{\ J}$ and denote the contact one-form as $\kappa = g(\reeb)$, we have
\be
	d\kappa_{mn} = 2 i (\sigma_3)_{IJ} \Theta^{IJ}_{mn}.
\ee

\section{The example of $Y^{p,q}$}\label{app:Ypqexample}

To make our procedure a bit more concrete, let us give some details on the example of $Y^{p,q}$, where we can write an explicit metric.
We essentially take all the relevant information from Gauntlett, Martelli, Sparks and Waldram\cite{Gauntlett:2004yd}, where the metric on $Y^{p,q}$ is given as
\be \label{eq:Ypqmetric}
\begin{split}
	ds^2 = &\frac{1-y}{6} (d\theta + \sin^2 \theta d\phi^2 ) + \frac{dy^2}{ w(y) q(y) } + \frac{q(y)}{9} [ d\psi - \cos \theta d\phi ]^2 \\
	& + w(y) \left [ d\alpha + \frac{a - 2y + y^2 }{6 ( a-y^2 ) } [ d\psi - \cos \theta d\phi ] \right ]^2
\end{split}
\ee
where
\[
\begin{split}
	&w(y) = \frac{2(a-y^2 ) } { 1-y} , \\
	&q(y) = \frac{a-3y^2 + 2y^3}{ a - y^2 } .
\end{split}
\]
Here the coordinates $(\theta,\phi, y, \psi)$ describe the 4D base and $\alpha$ describes the $S^1$ fiber.
The coordinates run over the following ranges
\[
	0 \leq \theta \leq \pi, \ \ \ 0 \leq \phi \leq 2\pi, \ \ \ y_1 \leq y \leq y_2 , \ \ \ 0 \leq \psi \leq 2\pi , \ \ \  0 < \alpha < 2\pi l  ,
\]
and the constant $a$ is chosen in the range $0 < a < 1$, for which case the equation $q(y)=0$ has one negative and two positive roots. We choose $y_1$ to be the negative root and $y_2$ to be the smallest positive root.
This makes sure that the base manifold described by $(\theta,\phi,y,\psi)$ has the topology of $S^2\times S^2$, and that $w(y) > 0$ everywhere so that $\alpha$ describe a no-where degenerating $S^1$ fiber.
More precisely, as explained in detail in \cite{Gauntlett:2004yd}, to get the proper SE manifold $Y^{p,q}$, we need to pick $a$ such that $y_2 - y_1 = \frac{3q}{2p}$; which they show that you can always do for any coprime $p > q$ .
This also fixes the constant $l$ which determine the range of $\alpha$.

The metric above makes very explicit the $S^1$ fibration structure, and in our construction we dimensionally reduce along  the $\alpha$ direction; which we emphasize is not the Reeb.
The canonical Reeb vector in these coordinates is given by
\be
	\reeb = 3\frac{\partial}{\partial \psi} - \frac{1}{2} \frac{\partial}{\partial \alpha } ,
\ee
which has constant unit norm. On the other hand we see that the radius of the $S^1$ fiber is given by $\sqrt{w(y)}$, which clearly is not constant over the base manifold.

One can also see that the 4D base manifold has the topology of $S^2 \times S^2$ by looking at the metric.
First we can observe that for fixed $y$ the first term in \eqref{eq:Ypqmetric} describes the $S^2$ covered by coordinates $(\theta,\phi)$.
We also see that $y$ is running over an interval, and the last term of the first line describes a circle parametrized by $\psi$ that degenerate at the ends of the interval (since $q(y_i) = 0$); which gives the second $S^2$.
From this, we know that the base has the structure of an $S^2$  bundle over $S^2$, but it is not immediately clear that its the trivial $S^2\times S^2$ rather than some non-trivial fibration.
However this is shown in \cite{Gauntlett:2004yd}, and we won't repeat the argument here.
So the 4D base is $S^2\times S^2$, equipped with a non-standard metric given by the first line of \eqref{eq:Ypqmetric}.

Next, let us briefly explain how this is connected to the toric description in terms of a moment map cone and its normals, a story first told in \cite{Martelli:2004wu}.
As explained in section \ref{sec_Acr} the toric picture of the manifold is that of a $T^3$ fibration over a polygon where some $S^1$ fibers degenerate as we go to the edges. At the vertices, an entire $T^2$ degenerate and we have a local geometry of $\mathbb{C}^2 \times S^1$.
For $Y^{p,q}$ we know that the base polygon has 4 edges (as seen in figure \ref{fig_hexagon}).
Each edge corresponds to a particular pole of one of the $S^2$, where the rotation of that $S^2$ has a fix point and thus degenerates.
From the metric we see that the edges are given by $\{\theta = 0\}$, $\{\theta=\pi\}$, $\{y=y_1\}$ and $\{y=y_2\}$.
We can then find vectors that generate rotations, i.e. combinations of $\partial_\phi, \partial_\psi$ and $\partial_\alpha$, that are such that their norm vanishes at each of these poles.
Doing this, we find
\be
	v_1 = \partial_\phi + \partial_\psi, \ \ \
	v_2 = \partial_\psi + \frac{p-q}{2l}\partial_\alpha, \ \ \
	v_3 = -\partial_\phi + \partial_\psi, \ \ \
	v_4 = \partial_\psi - \frac{p+q}{2l} \partial_\alpha ,
\ee
which degenerate at $\theta=0,y=y_1, \theta=\pi$ and $y=y_2$ respectively. Observe that we here rescale $\partial_\alpha$ by $1/l$ so that it has a normal period.
In this computation, we use properties of the roots $y_1,y_2$ that relates them to $p,q$:
\[
	\frac{y_1-1}{3 l y_1} = p+q, \ \ \  \frac{1- y_2}{3ly_2} = {p-q} .
\]
These vectors $v_1,\ldots,v_4$ are precisely the inward normals of the moment map cone, but to relate them to the ones given in section \ref{sec_Acr}, we need to make a change of basis. Instead of using the basis $\partial_\phi,\partial_\psi,\partial_\alpha$, we should use a basis of vectors whose orbits all close. The orbits of $l^{-1}\partial_\alpha$ closes everywhere since it is a proper fibration, but that is not true of the orbits of $\partial_\phi$ and $\partial_\psi$. One suitable basis is instead given by
\[
	e_1 = \partial_\phi + \partial_\psi, \ \ \ e_2 = -\partial_\phi + \frac{p-q}{2 l } \partial_\alpha, \ \ \  e_3 = -\frac{1}{l}\partial_\alpha,
\]
where of course this choice is far from unique: any $SL_3(\mathbb{Z})$ transformation of this give us an equally good basis.
In this basis, the vectors $v_1,\ldots,v_4$ has the components
\[
	v_1 = [1,0,0], \ \ v_2 = [ 1 , 1, 0], \ \ v_3 = [1, 2 , p-q ] , \ \ v_4 = [1,1, p]  ,
\]
and we recognize the normals of the moment map cone as given in example \ref{example:Ypq}.
So we have seen explicitly how the toric description and the explicit metric is related to one-another.

\section{The Weyl rescaled background} \label{app:weylrescaling}

As discussed in section \ref{sec_Defo}, after performing a Weyl transformation where the metric is rescaled $g \rightarrow \tilde g =e^{-2\phi} g$, we wish to show that we can still have rigid supersymmetry on this new background.
The scale factor $\phi$ is invariant along both the $U(1)$ fiber, and along the Reeb, i.e. $L_X \phi = L_{\reeb} \phi = \iota_{\reeb} d\phi = 0$.

To do this we use the minimal off-shell 5D supergravity \cite{Zucker:1999ej}, and focus on the Killing spinor equation coming from requiring the supersymmetry variation of the gravitino to vanish:
\be
	D_m \xi_I - t_I^{\ J} \Gamma_m \xi_J - \mathcal{F}_{mn} \Gamma^{n}\xi_I - \frac 1 2 \mathcal{V}^{pq}\Gamma_{mpq}\xi_I = 0 ,
\ee
where $D_m$ includes the coupling to the background $SU(2)_R$ gauge field $A_{m I}^{\ \ J}$, as $D_m \xi_I = \nabla_m \xi_I - A_{m I}^{\ \ J}\xi_J$.
Here, $\mathcal{F} = d\mathcal{A}$ is the field strength of the graviphoton, $\mathcal{V}$ is a 2-form background field and $t_I^{\ J}$ is background $SU(2)_R$ triplet scalar.
We will use a $\tilde \ $ to denote new quantities after the rescaling, while non-tilded variables denotes `old' quantities.
The idea now is that we can solve this equation by turning on these various background fields so that the new Killing spinor is a rescaling of the old one. In particular, we require that the new spinor $\tilde \xi_I$ is such that the bilinear giving us the Reeb vector is unchanged, i.e.
\[
	\reeb^m = \xi^I \Gamma^m \xi_I = \tilde \xi^I \tilde \Gamma^m \tilde \xi_I,
\]
and since $\tilde \Gamma^m = \Gamma_a \tilde e_a^m$ scales like the inverse vierbein, i.e. with $e^{\phi}$, this fixes
\[
	\tilde \xi_I = e^{-\phi/2} \xi_I.
\]
Next, we can compute how the spin connection changes because of the rescaling and then use our old solution to get rid of the derivative of the spinor from the equation.
The spin connection changes as
\be
	\tilde \go^{ab}_m = \go^{ab}_m + (\partial^n \phi ) ( e_n^a e^b_m  - e^b_n e^a_m ),
\ee
and using this as well as our old equation \eqref{KSSE}, our Killing spinor equation becomes
\be \label{eq:KiSpinor2}
 -\frac 1 2 (\partial_m \phi) \tilde \xi_I - \frac 1 2 (\partial^n \phi) \Gamma_{mn} \tilde\xi_I + A_{m I}^{\ \ J} \tilde \xi_J   + (t_I^{\ J} \Gamma_m -  \tilde t_{I}^{\ J }\tilde \Gamma_m) \tilde \xi_J - \mathcal{F}_{mn}\tilde \Gamma^n \tilde \xi_I -\frac 1 2 \mathcal{V}^{n p}\tilde \Gamma_{mnp}\tilde \xi_I = 0 .
\ee
This is now an algebraic equation for $\mathcal{F},\mathcal{V}, A$ and $\tilde t_I^{\ J}$, and it is a straightforward but somewhat tedious exercise to solve it.
When solving, it is helpful to note that a symplectic Majorana spinor $\chi_I$ is completely determined by the contractions $\xi^I \Gamma_m \chi_I$ and $\chi_{(J} \xi_{I)}$. So performing these contractions of equation \eqref{eq:KiSpinor2}, we get a set of equations that only involves spinor bilinears, and using the properties we know about the various bilinears, see appendix \ref{app:spinorbilinears}, we find the following solution for our various background fields:
\be \label{eq:backgroundsol}
\begin{split}
\mathcal{F} &= d \mathcal{A}, \ \ \ \ \mathcal{A} = -\frac 1 2 (e^{-\phi} - e^{-\phi_p} ) \kappa, \\
\mathcal{V} &= \frac 1 2 e^{-\phi} d\phi \wedge \kappa , \\
\tilde t_I^{\ J } &= - \frac i 2 ( e^\phi - 2 e^{2\phi - \phi_p} ) (\sigma_3)_I^{\ J} , \\
A_{I}^{\ J} &= -i ( 1 - e^{\phi-\phi_p} ) (\sigma_3)_I^{\ J} \kappa .
\end{split}
\ee

Here, $\phi_p$ is a free constant of our solution; and $\kappa$ is the old contact 1-form.
If we choose $\phi_p$ to be the scale factor $\phi$ evaluated at some point $p$ on our manifold, then when $\phi$ is a constant scaling, we note that the background fields $\mathcal{F},\mathcal{V}$ and $A$ all vanish, and the $\tilde t_I^{\ J}$ becomes a simple scaling of the old $t_{I}^{\ J}$ field.
This shows that our solution is smoothly connected to the old SE solution.

From supergravity, we also get a second equation that we need to solve, the dilatino equation.
This involves one further background scalar field $C$, which one can solve for directly in terms of the other background fields.
Through a tedious computation, one can then check that our solution also solves this equation so that we indeed have a valid rigid supersymmetric background.

\section{Cohomological variables}\label{app:cohomological}
The 5D supersymmetry given of our vector multiplet looks like
\bea
	Q A_m &=& i\xi^I \Gamma_m \lambda_I ~, \\
	Q \sigma &=& i \xi^I \lambda_I ~, \\
	Q \lambda_I &=& - \frac 1 2 F_{mn}\Gamma^{mn}\xi_I + (D_m\sigma)\Gamma^{m}\xi_I + D_I^{\ J}\xi_J + 2 \sigma ( t_I^{\ J} \xi_J + \frac 1 2 \mathcal{F}_{mn}\Gamma^{mn}\xi_I )~ , \\
	Q D_{IJ} &=& - i (\xi_I \Gamma^m D_m \lambda_J ) + [\sigma, \xi_I \lambda_J] + i  t_I^{\ K} \xi_K \lambda_J - \frac i 2 \mathcal{V}_{mn} (\xi_I \Gamma^{mn} \lambda_J) + (I \leftrightarrow J)~,
\eea	
where the full set of background supergravity fields are included.
In the SE case, the only non-zero background field is $t_I^{\ J}$, the others ($\mathcal{F,V}$) vanish.
We can make the structure of the supersymmetry clearer by switching to cohomological variables, following for example \cite{Kallen:2012va}.
This change of variables is given by
\[
\begin{split}
	\Psi_m &= \xi_I \Gamma_m \lambda^I~ , \ \ \ \ \chi_{mn} = \xi_I \Gamma_{mn} \lambda^I +  ( \kappa_m \Psi_n - \kappa_n \Psi_m)~ , \\
	H& = Q \chi = 2 F^+_H + \Theta^{IJ} ( D_{IJ} + 2 t_{IJ} \sigma ) ~,
\end{split}
\]
where in the last line we have used our particular SE background to only keep $t_{IJ}$.
Here $\Psi$ is a fermionic one-form, and $\chi,H$ are horizontal, transversally self-dual 2-forms.
$F_H^+$ denotes the self-dual part of the horizontal part of $F$; and we see that $\chi$ and $H$ essentially becomes the auxiliary fields.
In these variables, the supersymmetry variation takes the form of the cohomological complex \cite{Baulieu:1997nj, Kallen:2012cs},
\bea \label{eq:cohomcomplex}
Q A &=&i \Psi~ , \\
Q \Psi &=& - \iota_{\reeb} F + d_A ( \sigma ) ~, \\
Q \sigma &=& - i \iota_{\reeb} \Psi~, \\
Q \chi &=& H~ , \\
Q H &=& - i L_{\reeb}^A \chi - [ \sigma, \chi ]~ .
\eea
Written in these variables it is clear that $Q^2 = - i L_{\reeb}^A + G_{\sigma}$ where $G_{\sigma}$ denotes a gauge transformation with parameter $\sigma$, and $L_{\reeb}^A$ is the gauge covariant Lie derivative along the Reeb, $L_{\reeb}^A = L_{\reeb} + G_{\iota_{\reeb} A} $.

\subsection{Weyl rescaled case}

In the Weyl rescaled background as described in appendix \ref{app:weylrescaling}, we can perform the same change of variables.
Now, the Reeb vector that appears in our supersymmetry is no longer normalized, so we have to insert its norm in the appropriate places in our change of variables. And since the background fields appear in the variation of $\chi$, the definition of $H$ will also change.
So for our new background we make the change of variables
\[
\begin{split}
	\tilde\Psi_m &= \tilde\xi_I \tilde\Gamma_m \lambda^I ~, \ \ \ \ \tilde\chi_{mn} = \tilde\xi_I \tilde\Gamma_{mn} \lambda^I +  e^{-\phi}( \kappa_m \tilde\Psi_n - \kappa_n \tilde\Psi_m) ~, \\
	\tilde H& = Q \tilde\chi = 2 e^{-\phi} F_H^+ + 2 \sigma e^{-\phi} ( e^{-\phi}-e^{\phi_p} )  {d\kappa}^+ + ( D_{IJ} + 2\sigma t_{IJ}) \tilde \Theta^{IJ} ~,
\end{split}
\]
where we have used the specific form of our background.
Computing the supersymmetry variations of our new cohomological variables, we find that it is natural to make the field redefinition
\be
	\tilde \sigma = e^{-\phi} \sigma~,
\ee
because in terms of this field, the new complex takes the form
\bea \label{eq:cohomcomplex}
Q A &=&i \tilde \Psi ~, \\
Q \tilde\Psi &=& - \iota_{\reeb} F + d_A (\tilde \sigma )~ , \\
Q\tilde \sigma &=& - i \iota_{\reeb} \Psi ~, \\
Q \tilde \chi &=&\tilde H ~ , \\
Q \tilde H &=& - i L_{\reeb}^A\tilde \chi - [\tilde \sigma,\tilde \chi ]~ ,
\eea
which has exactly the same form as the complex before the rescaling. In the computation, various cancellations between the background fields take place, and we are left with the above result.
This is to be expected, since the parameters of the square of the supersymmetry depends on the two spinor bilinears $\tilde \xi^I\tilde \Gamma^m \tilde \xi_I = \tilde \reeb^m = \reeb^m$ and $-\sigma \tilde\xi^I\tilde\xi_I~=~e^{-\phi}\sigma=\tilde\sigma$.

\providecommand{\href}[2]{#2}\begingroup\raggedright

\bibliographystyle{utphys}
\bibliography{5Dgaugetheory}{}

\endgroup

\end{document}